 \newcommand{\hs}[1]{\hspace*{ #1 mm}}
 \newcommand{\real}{\mathbb{R}}
 \newcommand{\nat}{\mathbb{N}}
 \newcommand{\integer}{\mathbb{Z}}
 \newcommand{\poly}{\mathrm{poly}}
 \def\bbox{\vrule height6pt width6pt depth1pt}
\theoremstyle{plain}
 \newtheorem{theorem}{Theorem}[section]
 \newtheorem{lemma}[theorem]{Lemma}
  \newtheorem{definition}[theorem]{Definition}}
 \newenvironment{proof}{\par \noindent
            {\bf Proof. \hs{2}}}{\hfill\bbox \vspace*{3mm}}
 \newcommand{\bra}[1]{\langle #1 |}
 \newcommand{\ket}[1]{| #1 \rangle}
 \newcommand{\braket}[2]{\langle #1 , #2 \rangle}
\newif\ifnotesw\noteswtrue% T to show box & marginal notes; F suppresses.
\ifnotesw\marginpar[\hfill\(\top\)]{\(\top\)}\fi}%
\ifnotesw\marginpar[\hfill\(\bot\)]{\(\bot\)}\fi}
\newcommand{\mnote}[1]%
   {\ifnotesw\marginpar%
	  [{\scriptsize\begin{minipage}[t]{\marginparwidth}
	  \raggedleft#1%
		  \end{minipage}}]%
	  {\scriptsize\begin{minipage}[t]{\marginparwidth}
	  \raggedright#1%
		  \end{minipage}}%
    \fi}
\newcommand{\ignore}[1]{}
\begin{document}

\author{Benjam\'in Bar\'an\thanks{Email: \texttt{bbaran@cba.com.py}} \qquad\qquad Marcos Villagra\thanks{Email: \texttt{mvillagra@pol.una.py}; corresponding author}\\
\\
\emph{Universidad Nacional de Asunci\'on}\\
\emph{N\'ucleo de Investigaci\'on y Desarrollo Tecnol\'ogico (NIDTEC)}
}
\title{Multiobjective Optimization in a Quantum Adiabatic Computer\thanks{An extended abstract of this paper appeared in Ref.~\cite{BV16}}}
\date{}
\maketitle

\begin{abstract}
In this work we present a quantum algorithm for multiobjective combinatorial optimization. We show how to map a convex combination of objective functions onto a Hamiltonian and then use that Hamiltonian to prove that the quantum adiabatic  algorithm of Farhi \emph{et al.}\cite{FGG00} can find Pareto-optimal solutions in finite time provided certain convex combinations of objectives are used and the underlying multiobjective problem meets certain restrictions.
\end{abstract}
\noindent{{\bf Keywords:}} quantum computation, multiobjective optimization, quantum adiabatic evolution.\\

%%%%%%%%%%%%%%%%%%%%%%%%%%%%%%%%%%%%
%%%%%%%%%%%%%%%%%%%%%%%%%%%%%%%%%%%%
%%%%%%%%%%%%%%%%%%%%%%%%%%%%%%%%%%%%
\section{Introduction}
Optimization problems are pervasive in everyday applications like logistics, communication networks, artificial intelligence and many other areas. Consequently, there is a high demand of efficient algorithms for these problems. Many algorithmic and engineering techniques applied to optimization problems are being developed to make an efficient use of computational resources in optimization problems. In fact, several engineering applications are multiobjective optimization problems, where several objectives must be optimized at the same time. For a survey on multiobjective optimization see for example Refs.~\cite{EG00,LBB14}. In this work, we present what we consider the first algorithm for multiobjective optimization using a quantum adiabatic computer.

Quantum computation is a promising paradigm for the design of highly efficient algorithms based on the principles of quantum mechanics. Researchers have studied the computational power of quantum computers by showing the advantages it presents over classical computers in many applications. Two of the most well-know applications are in unstructured search and the factoring of composite numbers. In structured search, Grover's algorithm can find a single marked element among $n$ elements in time $O(\sqrt{n})$, whereas any other classical algorithm requires time at least $n$ \cite{Gro96}.  Shor's algorithm can factor composite numbers in polynomial time---any other known classical algorithm can find factors of composite numbers in subexponential time (it is open whether a classical algorithm can find factors in polynomial time) \cite{Sho94}.

Initially, before the year 2000, optimization problems were not easy to construct using quantum computers. This was because most studied models of quantum computers were based on quantum circuits which presented difficulties for the design of optimization algorithms. The first paper reporting on solving an optimization problem was in Ref.~\cite{DH99}. Their algorithm finds a minimum inside an array of $n$ numbers in time $O(\sqrt{n})$. More recently, Baritompa \emph{et al.}\cite{BBW05} presented an improved algorithm based on Ref.~\cite{DH99}; this latter algorithm, however, does not have a proof of convergence in finite time. The algorithms of Refs.~\cite{DH99} and \cite{BBW05} are based on Grover's search, and hence, in the quantum circuit model.

\emph{Quantum Adiabatic Computing} was introduced by Farhi \emph{et al.}\cite{FGG00} as a new quantum algorithm and computation paradigm more friendly to optimization problems. This new paradigm is based on a natural phenomenon of quantum annealing \cite{DC08}; analogously to classical annealing, optimization problems are mapped onto a natural optimization phenomenon, and thus, optimal solutions are found by just letting this phenomenon to take place.

The algorithms of Refs. \cite{DH99} and \cite{BBW05} are difficult to extend to multiobjective optimization and to prove convergence in finite time. Hence, quantum adiabatic computing presents itself as a more suitable model to achieve the following two goals: (i) to propose a quantum algorithm for multiobjective optimization and (ii) prove convergence in finite time of the algorithm.

In this work, as our main contribution, we show that the quantum adiabatic algorithm of Farhi \emph{et al.}\cite{FGG00} can be used to find Pareto-optimal solutions in finite time provided certain restrictions are met. In Theorem \ref{the:alg}, we identify two structural features that any multiobjective optimization problem must have in order to use the abovementioned adiabatic algorithm.

The outline of this paper is the following. In Section \ref{sec:MCO} we present a brief overview of multiobjective combinatorial optimization and introduce the notation used throughout this work; in particular, several new properties of multiobjective combinatorial optimization are also presented that are of independent interest. In Section \ref{sec:adiabatic} we explain the quantum adiabatic theorem, which is the basis of the adiabatic algorithm. In Section \ref{sec:ada-algorithm} we explain the adiabatic algorithm and its application to combinatorial multiobjective optimization. In Section  \ref{sec:convex} we prove our main result of Theorem \ref{the:alg}. In Section \ref{sec:two-parabolas} we show how to use the adiabatic algorithm in a concrete problem. Finally, in Section \ref{sec:open} we present a list of challenging open problems.

%%%%%%%%%%%%%%%%%%%%%%%%%%%%%%%%%%%%
%%%%%%%%%%%%%%%%%%%%%%%%%%%%%%%%%%%%
%%%%%%%%%%%%%%%%%%%%%%%%%%%%%%%%%%%%
\section{Multiobjective Combinatorial Optimization}\label{sec:MCO}
In this section we introduce the notation used throughout this paper and the main concepts of multiobjective optimization. The set of natural numbers (including 0) is denoted $\nat$, the set of integers is $\integer$, the set of real numbers is denoted $\real$ and the set of positive real numbers is $\real^+$. For any $i,j\in \nat$, with $i< j$, we let $[i,j]_\integer$ denote the discrete interval $\{i,i+1,\dots, j-1,j\}$. The set of binary words of length $n$ is denoted $\{0,1\}^n$. We also let $\poly(n)=O(n^c)$ be a polynomial in $n$.

%\subsection{Definition}

A \emph{multiobjective combinatorial optimization problem} (or MCO) is an optimization problem involving multiple objectives over a finite set of feasible solutions. These objectives typically present trade-offs among solutions and in general there is no single optimal solution. In this work, we follow the definition of Ref. \cite{KLP75}. Furthermore, with no loss of generality, all optimization problems considered in this work are minimization problems.

Let $S_1,\dots,S_d$ be totally ordered sets and let $\leq_i$ be an order on set $S_i$ for each $i\in [1,d]_{\integer}$. We also let $n_i$ be the cardinality of $S_i$. Define the natural partial order relation $\prec$ over the cartesian product $S_1 \times \cdots \times S_d$ in the following way. For any $u=(u_1,\dots,u_d)$ and $v=(v_1,\dots,v_d)$ in $S_1\times \cdots \times S_d$, we write $u\prec v$ if and only if for any $i\in[1,d]_\integer$ it holds that $u_i \leq_i v_i$; otherwise we write $u\nprec v$. An element $u\in S$ is a \emph{minimal element} if there is no $v\in S$ such that $v\prec u$ and $v\neq u$. Moreover, we say that $u$ is \emph{non-comparable} with $v$ if $u\nprec v$ and $v\nprec u$ and succinctly write $u\sim v$. In the context of multiobjective optimization, the relation $\prec$ as defined here is often referred to as the \emph{Pareto-order} relation \cite{KLP75}.

\begin{definition}
A \emph{multiobjective combinatorial optimization problem} (or shortly, MCO) is defined as a tuple $\Pi=(D,R,d,\mathcal{F},\prec)$ where $D$ is a finite set called domain, $R\subseteq \real^+$ is a set of values, $d$ is a positive integer, $\mathcal{F}$ is a finite collection of functions $\{f_i\}_{i\in[1,d]_\integer}$ where each $f_i$ maps from $D$ to $R$, and $\prec$ is the Pareto-order relation on $R^d$ (here $R^d$ is the $d$-fold cartesian product on $R$). Define a function $f$ that maps $D$ to $R^d$ as $f(x)=(f_1(x),\dots,f_d(x))$ referred as the \emph{objective vector} of $\Pi$. If $f(x)$ is a minimal element of $R^d$ we say that $x$ is a \emph{Pareto-optimal solution} of $\Pi$. For any two elements $x,y\in D$, if $f(x)\prec f(y)$ we write $x\prec y$; similarly, if $f(x)\sim f(y)$ we write $x\sim y$. For any $x,y\in D$, if $x\prec y$ and $y\prec x$ we say that $x$ and $y$ are \emph{equivalent} and write $x \equiv y$. The set of all Pareto-optimal solutions of $\Pi$ is denoted $P(\Pi)$.
\end{definition}

A canonical example in multiobjective optimization is the Two-Parabolas problem. In this problem we have two objective functions defined by two parabolas that intersect in a single point, see Fig.\ref{fig:two-parabolas}. In this work, we will only be concerned with a combinatorial version of the Two-Parabolas problem where each objective function only takes values on a finite set of numbers. 

\begin{figure*}[t]
\centering
\includegraphics[scale=0.7]{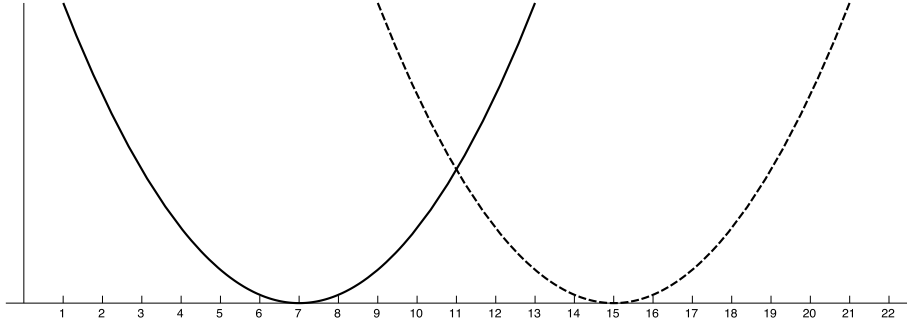}
\caption{The Two-Parabolas Problem. The first objective function $f_1$ is represented by the bold line and the second objective function $f_2$ by the dashed line. For MCOs, each objective function takes values only on the natural numbers. Note that there are no equivalent elements in the domain. In this particular example, all the solutions between 7 and 15 are Pareto-optimal.}
\label{fig:two-parabolas}
\end{figure*}

Considering that the set of Pareto-optimal solutions can be very large, we are mostly concerned on finding a subset of the Pareto-optimal solutions. Optimal query algorithms to find all Pareto-optimal solutions for $d=2,3$ and almost tight upper and lower bounds for any $d\geq 4$ up to polylogarithmic factors were discovered by \cite{KLP75}; \cite{PY00} showed how to find an approximation to all Pareto-optimal solutions in polynomial time.

For the remaining of this work, $\prec$ will always be the Pareto-order relation and will be omitted from the definition of any MCO. Furthermore, for convenience, we will often write $\Pi_d=(D,R,\mathcal{F})$ as a short-hand for $\Pi=(D,R,d,\mathcal{F})$. In addition, we will assume for this work that each function $f_i\in \mathcal{F}$ is computable in polynomial time and each $f_i(x)$ is bounded by a polynomial in the number of bits of $x$.

%\subsection{Some Foundational Properties}
%In this section we study properties of MCOs that will be necessary later in our work.

\begin{definition}
An MCO $\Pi_d$ is \emph{well-formed} if for each $f_i\in \mathcal{F}$ there is a unique $x\in D$ such that $f_i(x)=0$. An MCO $\Pi_d$ is \emph{normal} if it is well-formed and $f_i(x)=0$ and $f_j(y)=0$, for $i\neq j$, implies $x\neq y$.
\end{definition}

In a normal MCO, the value of an optimal solution in each $f_i$ is 0, and all optimal solutions are different. In Fig.\ref{fig:two-parabolas}, solutions 7 and 15 are optimal solutions of  $f_1$ and $f_2$ with value 0, respectively; hence, the Two-Parabolas problem of Fig.\ref{fig:two-parabolas} is normal.

\begin{definition}
An MCO $\Pi_d$ is \emph{collision-free} if given $\lambda=(\lambda_1,\dots,\lambda_d)$, with each $\lambda_i \in \real^+$, for any $i\in[1,d]_\integer$ and any pair $x,y\in D$ it holds that $|f_i(x)-f_i(y)|>\lambda_i$. If $\Pi_d$ is collision-free we write succinctly as $\Pi_d^\lambda$.
\end{definition}

The Two-Parabolas problem of Fig.\ref{fig:two-parabolas} is not collision-free; for example, for solutions 5 and 9 we have that $f_1(5)=f_1(9)$. In Section \ref{sec:two-parabolas} we show how to turn the Two-Parabolas problem into a collision-free MCO.

\begin{definition}
A Pareto-optimal solution $x$ is \emph{trivial} if $x$ is an optimal solution of some $f_i\in \mathcal{F}$.
\end{definition}

In Fig.\ref{fig:two-parabolas}, solutions 7 and 15 are trivial Pareto-optimal solutions, whereas any $x$ between 7 and 15 is non-trivial.

\begin{lemma}
For any normal MCO $\Pi_d$, if $x$ and $y$ are trivial Pareto-optimal solutions of $\Pi_d$, then $x$ and $y$ are not equivalent.
\end{lemma}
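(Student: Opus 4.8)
The plan is to collapse the equivalence relation $\equiv$ into plain equality of objective vectors, and then to invoke well-formedness to produce a single coordinate on which the two solutions disagree. First I would unfold the definition of $\equiv$: by definition $x\equiv y$ means $x\prec y$ \emph{and} $y\prec x$, i.e.\ $f_i(x)\leq f_i(y)$ and $f_i(y)\leq f_i(x)$ for every $i\in[1,d]_\integer$. Since each coordinate of the Pareto order on $R^d$ is the usual total order on $R\subseteq\real^+$, these two families of inequalities together force $f_i(x)=f_i(y)$ for all $i$, that is, $f(x)=f(y)$. Hence to prove non-equivalence it suffices to exhibit one index $i$ with $f_i(x)\neq f_i(y)$.

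To find such an index I would use the triviality of $x$ together with normality. By the definition of a trivial solution, $x$ is an optimal (i.e.\ minimum-value) solution of some $f_i\in\FF$; because $\Pi_d$ is normal it is in particular well-formed, so $f_i$ attains the value $0$ at exactly one point of $D$, and that point is precisely its minimizer. Thus $f_i(x)=0$ and $x$ is the unique element of $D$ on which $f_i$ vanishes. Taking the two trivial Pareto-optimal solutions $x$ and $y$ to be distinct, uniqueness immediately gives $f_i(y)\neq 0=f_i(x)$. This is exactly the disagreeing coordinate required in the previous paragraph, so $f(x)\neq f(y)$ and therefore $x\not\equiv y$.

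The argument is genuinely short; the only real care---and where I expect whatever modest difficulty there is to sit---is in the bookkeeping around the word ``equivalent''. One must observe that $\equiv$ is reflexive, since $f(x)=f(x)$ makes $x\equiv x$ hold trivially, so the statement is meaningful only for \emph{distinct} $x$ and $y$; and it is exactly that distinctness which, through well-formedness, forces $f_i(y)\neq 0$. Two remarks are worth recording. First, only well-formedness of $\Pi_d$ is actually used: the additional normality clause demanding that distinct objectives have distinct minimizers plays no role in this lemma. Second, one does not even need $y$ to be trivial---triviality of $x$ alone, together with $y\neq x$, already yields $x\not\equiv y$---so the symmetric hypothesis is stronger than strictly necessary.
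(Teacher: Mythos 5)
Your proof is correct and rests on the same basic mechanism as the paper's: exhibit a coordinate $i$ with $f_i(x)=0$ but $f_i(y)>0$. The difference is in how much each argument extracts from the hypotheses. The paper uses the triviality of \emph{both} solutions together with the normality clause (distinct objectives have distinct minimizers) to produce two such disagreeing coordinates, one in each direction, and thereby concludes the stronger statement $x\sim y$ (mutual non-domination), of which non-equivalence is a consequence. You instead reduce $x\equiv y$ to $f(x)=f(y)$ and observe that a single disagreeing coordinate suffices; this lets you drop both the triviality of $y$ and the second half of normality, using only well-formedness plus $x\neq y$. Your version is the minimal argument for the stated conclusion, and your remark that the hypotheses are stronger than necessary is accurate; the paper's route buys the extra information that distinct trivial Pareto-optimal solutions are in fact non-comparable, which is closer to what it actually uses later. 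One point you handle more carefully than the paper: since $\equiv$ is reflexive, the claim is false for $x=y$, so distinctness must either be assumed or derived. The paper derives $x\neq y$ from normality, which tacitly presupposes that $x$ and $y$ minimize \emph{different} objectives $f_i$ and $f_j$ with $i\neq j$; you make the distinctness assumption explicit, which is the cleaner reading of the lemma.
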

\begin{proof}
Let $x,y$ be two trivial Pareto-optimal solutions of $\Pi_d$. There exists $i,j$ such that $f_i(x)$ and $f_j(y)=0$. Since $\Pi_d$ is normal we have that $x\neq y$ and $f_i(y)>0$ and $f_j(x)>0$, hence, $x\sim y$ and they are not equivalent.
\end{proof}

Let $W_d$ be a set of of normalized vectors in $[0,1)^d$, the continuous interval between 0 and less than 1, defined as
\begin{equation}\label{eq:wd}
W_d=\left\{w=(w_1,\dots,w_d)\in [0,1)^d \Bigg|\sum_{i=1}^d w_i=1\right\}.
\end{equation}
For any $w\in W_d$, define $\langle f(x),w \rangle=\langle w,f(x)\rangle=w_1f_1(x)+\cdots + w_d f_d(x)$. 

\begin{lemma}\label{lem:cont}
Given $\Pi_d=(D,R,\mathcal{F})$, any two elements $x,y\in D$ are equivalent if and only if for all $w\in W_d$ it holds that $\langle f(x),w\rangle=\langle f(y),w\rangle$.
\end{lemma}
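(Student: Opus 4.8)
The plan is to reduce the stated equivalence to an elementary fact about linear functionals on the open simplex $W_d$. First observe that, by the definition of $\equiv$, the relation $x \equiv y$ holds exactly when $f(x) \prec f(y)$ and $f(y) \prec f(x)$; since each $\leq_i$ is a total order on $S_i$, these two conditions together force $f_i(x) = f_i(y)$ for every $i \in [1,d]_\integer$, i.e.\ $f(x) = f(y)$ as vectors in $R^d$. Thus the lemma is equivalent to the assertion that $f(x) = f(y)$ if and only if $\langle f(x), w\rangle = \langle f(y), w\rangle$ for all $w \in W_d$. The forward implication is then immediate: if the two objective vectors coincide componentwise, every weighted sum agrees.

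For the nontrivial backward direction, I would set $g = f(x) - f(y) \in \real^d$ and rewrite the hypothesis as $\langle g, w\rangle = 0$ for every $w \in W_d$, the goal being to conclude $g = 0$. The natural temptation is to substitute the standard basis vectors $e_i$ to read off each $g_i$ directly, but this is exactly where the subtlety lies: because $W_d$ requires each coordinate $w_i < 1$, the vertices $e_i$ of the simplex are excluded, so $e_i \notin W_d$ and no single substitution isolates a coordinate. Assuming $d \geq 2$ (the genuinely multiobjective case), I would instead exploit the fact that $W_d$ is a full-dimensional, relatively open subset of the affine hyperplane $\{w : \sum_i w_i = 1\}$. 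Concretely, take the barycenter $w^\ast = (1/d, \dots, 1/d) \in W_d$; for any pair $i \neq j$ and all sufficiently small $\epsilon > 0$ the perturbed point $w^\ast + \epsilon(e_i - e_j)$ still lies in $W_d$. Evaluating the hypothesis at $w^\ast$ and at this perturbation and subtracting gives $\epsilon\,\langle g, e_i - e_j\rangle = 0$, hence $g_i = g_j$ for all $i,j$; so $g = c\,\mathbf{1}$ for a single scalar $c$. Feeding this back into $\langle g, w^\ast\rangle = 0$ yields $c = 0$, so $g = 0$ and $f(x) = f(y)$, which gives $x \equiv y$ and completes the proof.

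The main obstacle, as indicated, is purely the half-open nature of $W_d$: one must avoid treating $W_d$ as though it contained the simplex vertices. Phrased invariantly, the differences $\{w - w' : w, w' \in W_d\}$ span the linear hyperplane $H_0 = \{v : \sum_i v_i = 0\}$, so the hypothesis forces $g \perp H_0$, i.e.\ $g \in \mathrm{span}\{\mathbf{1}\}$, after which a single evaluation pins down the scalar. I would also flag the degenerate case $d = 1$: there $W_1 = \emptyset$, the hypothesis is vacuous, and the equivalence fails, so the intended reading is $d \geq 2$, which I would record as a standing assumption consistent with the multiobjective setting.
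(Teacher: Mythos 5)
Your proof is correct, and for the nontrivial direction it takes a genuinely different (and tighter) route than the paper. The paper argues by contradiction: assuming $x\not\equiv y$, it reduces ``with no loss of generality'' to the case where \emph{exactly one} coordinate $i$ satisfies $f_i(x)\neq f_i(y)$, and then observes that $w_i(f_i(x)-f_i(y))$ equals a sum that vanishes, a contradiction. That reduction is not really without loss of generality: if $f(x)$ and $f(y)$ differ in several coordinates, a single linear relation $\langle f(x)-f(y),w\rangle=0$ can hold by cancellation, so the paper's argument only covers a special case (and it also implicitly needs $w_i>0$, which $W_d$ does not guarantee for a fixed $i$). Your perturbation argument around the barycenter handles the general case cleanly: from $\langle g,w\rangle=0$ on all of $W_d$ you extract $\langle g,e_i-e_j\rangle=0$ for all $i,j$ using points $w^\ast+\epsilon(e_i-e_j)$ that genuinely lie in $W_d$, conclude $g\in\mathrm{span}\{\mathbf{1}\}$, and then kill the scalar with one more evaluation. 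You also correctly flag the two boundary issues the paper ignores --- the vertices $e_i$ are excluded from $W_d$, and $W_1=\emptyset$ makes the statement fail for $d=1$ --- so the standing assumption $d\geq 2$ you propose is the right reading. In short, your argument is a strict improvement in rigor over the paper's; what the paper's version buys is only brevity.
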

\begin{proof}
Assume that $x\equiv y$. Hence $f(x)=f(y)$. If we pick any $w\in W_d$ we have that
\[
\langle f(x),w\rangle = w_1f_1(x)+\cdots+w_df_d(x)=w_1f_1(y)+\cdots+w_df_d(y)=\langle f(y),w\rangle.
\]

Now suppose that for all $w\in W_d$ it holds $\langle f(x),w\rangle=\langle f(y),w\rangle$. By contradiction, assume that $x\not\equiv y$. With no loss of generality, assume further that there is exactly one $i\in [1,d]_\integer$ such that $f_i(x)\neq f_i(y)$. Hence
\begin{equation}\label{eq:cont}
w_i(f_i(x)-f_i(y))=\sum_{j\neq i}w_j(f_j(y)-f_j(x)).
\end{equation}
The right hand of Eq.(\ref{eq:cont}) is 0 because for all $j\neq i$ we have that $f_j(x)=f_j(y)$. The left hand of Eq.(\ref{eq:cont}), however, is not 0 by our assumption, hence, a contradiction. Therefore, it must be that $x$ and $y$ are equivalent.
\end{proof}

\begin{lemma}\label{lem:sup}
Let $\Pi_d=(D,R,\mathcal{F})$. For any $w\in W_d$ there exists $x\in D$ such that if $\langle f(x),w\rangle=\min_{y\in D}\{\langle f(y),w\rangle\}$, then $x$ is a Pareto-optimal solution of $\Pi_d$.
\end{lemma}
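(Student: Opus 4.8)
The plan is to read the statement in its only non-trivial form: the set of minimizers of the scalarized objective $\langle f(\cdot),w\rangle$ contains at least one Pareto-optimal solution. Since $D$ is finite, the minimum $m=\min_{y\in D}\langle f(y),w\rangle$ is attained, so the set $M=\{x\in D:\langle f(x),w\rangle=m\}$ of minimizers is a nonempty finite set. I would prove that $M$ always contains a Pareto-optimal solution of $\Pi_d$, which is what the lemma asserts.

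First I would note that, because $M$ is a finite nonempty set partially ordered by $\prec$, it must have at least one minimal element; call it $x^*$. The core of the argument is then to upgrade this from ``minimal within $M$'' to ``Pareto-optimal in all of $D$''. I would argue by contradiction: if $x^*$ were not Pareto-optimal, there would exist some $y\in D$ with $f(y)\prec f(x^*)$ and $f(y)\neq f(x^*)$, that is, $f_i(y)\leq f_i(x^*)$ for every $i$ with strict inequality in at least one coordinate.

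From $f_i(y)\leq f_i(x^*)$ for all $i$ together with $w_i\geq 0$, it follows that $\langle f(y),w\rangle=\sum_i w_i f_i(y)\leq \sum_i w_i f_i(x^*)=m$. Since $m$ is the global minimum, this forces $\langle f(y),w\rangle=m$, hence $y\in M$. But $f(y)\neq f(x^*)$ gives $y\neq x^*$, so $y$ is an element of $M$ with $y\prec x^*$ and $y\neq x^*$, contradicting the minimality of $x^*$ in $M$. Therefore $x^*$ is Pareto-optimal, completing the proof.

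The step I expect to need the most care is exactly the inequality $\langle f(y),w\rangle\leq m$: because $W_d$ permits some coordinates $w_i$ to vanish, I cannot promote a coordinatewise improvement into a \emph{strict} decrease of the scalarized value, so an arbitrary minimizer need only be weakly Pareto-optimal. This is precisely why I select a $\prec$-minimal element of $M$ rather than an arbitrary minimizer, and the finiteness of $D$ (hence of $M$) is what guarantees that such a minimal element exists and lets me close the contradiction. Had every $w_i$ been strictly positive, any minimizer would already be Pareto-optimal and the minimal-element selection would be unnecessary; the argument above is the version that survives the presence of zero weights.
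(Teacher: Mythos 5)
Your proof is correct, but it takes a genuinely different route from the paper's. The paper argues directly that \emph{any} minimizer $x$ of $\langle f(\cdot),w\rangle$ is Pareto-optimal, by splitting on whether a competitor $y$ has equal or strictly larger scalarized value: in the equal case it asserts that either $f(x)=f(y)$ or there is a pair $i,j$ with $w_if_i(x)<w_if_i(y)$ and $w_jf_j(y)<w_jf_j(x)$ (whence $x\sim y$), and in the strictly-larger case it finds one coordinate where $x$ beats $y$, so $y\nprec x$. That dichotomy in the equal-value case is exhaustive only when every $w_i>0$; since $W_d$ is defined as $[0,1)^d$ intersected with the simplex, zero weights are permitted for $d\geq 3$, and then a minimizer can be dominated by another minimizer that improves only a zero-weighted coordinate --- exactly the failure mode you flag. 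Your fix, selecting a $\prec$-minimal element of the (finite, nonempty) set of minimizers and showing any dominating point would itself be a minimizer, is the standard scalarization argument and proves the existential form of the statement without any positivity assumption on $w$; it buys robustness at the cost of no longer showing that every minimizer is Pareto-optimal (which is in fact false with zero weights). One small point of care: since $\prec$ on $D$ is only a preorder (distinct equivalent solutions dominate each other), you should take $x^*$ to be a preimage of a minimal element of the finite poset $f(M)\subseteq R^d$ rather than a minimal element of $M$ itself; your contradiction then lands on $f(y)\prec f(x^*)$ with $f(y)\neq f(x^*)$, which is exactly what that choice rules out.
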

\begin{proof}
Fix $w\in W_d$ and let $x\in D$ be such that $\langle f(x),w \rangle$ is minimum among all elements of $D$.
For any $y\in D$, with $y\neq x$, we need to consider two cases: (1) $\langle f(y),w\rangle=\langle f(x),w\rangle$ and (2) $\langle f(y),w\rangle > \langle f(x),w\rangle$.
\vspace{0.2cm}

\noindent\emph{Case (1)}.
Here we have another two subcases, either $f_i(y)=f_i(x)$ for all $i$ or there exists at least one pair $i,j\in\{1,\dots,d\}$ such that $w_if_i(x)<w_if_i(y)$ and $w_jf_j(y)<w_jf_j(x)$. When $f_i(x)=f_i(y)$ for each $i=1,\dots,d$ we have that $x$ and $y$ are equivalent. On the contrary, if  $w_if_i(x)<w_if_i(y)$ and $w_jf_j(y)<w_jf_j(x)$, we have that $f_i(x)<f_i(y)$ and $f_j(y)<f_j(x)$, and hence, $x\sim y$.
\vspace{0.2cm}

\noindent\emph{Case (2)}. In this case, there exists $i\in\{1,\dots,d\}$ such that $w_if_i(x)<w_if_i(y)$, and hence, $f_i(x)<f_i(y)$. Thus, $f(y)\not\prec f(x)$ and $y\not\prec x$ for any $y\neq x$.
\vspace{0.2cm}

We conclude from Case (1)  that $x\equiv y$ or $x\sim y$, and from Case (2) that $y\nprec x$. Therefore, $x$ is Pareto-optimal.\end{proof}

In this work, we will concentrate on finding non-trivial Pareto-optimal solutions. Finding trivial elements can be done by letting $w_i=1$ for some $i\in[1,d]_\integer$ and then running and optimization algorithm for $f_i$; consequently, in Eq.(\ref{eq:wd}) we do not allow for any $w_i$ to be 1. The process of mapping several objectives to a single-objective optimization problem is sometimes referred as a \emph{linearization} of the MCO \cite{EG00}.

From Lemma \ref{lem:sup}, we know that some Pareto-optimal solutions may not be optimal solutions for any linearization $w\in W_d$. We define the set of \emph{non-supported Pareto-optimal solutions} as the set $N(\Pi)$ of all Pareto-optimal solutions $x$ such that $\langle f(x),w\rangle$ is not optimal for any $w\in W_d$. We also define the set $S(\Pi)$ of \emph{supported Pareto-optimal solutions} as the set $S(\Pi)=P(\Pi)\setminus N(\Pi)$ \cite{EG00}. 

Note that there may be Pareto-optimal solutions $x$ and $y$ that are non-comparable and $\langle f(x),w\rangle=\langle f(y),w\rangle$ for some $w\in W_d$. That is equivalent to say that the objective function obtained from a linearization of an MCO is not injective.

\begin{definition}
Any two elements $x,y\in D$ are \emph{weakly-equivalent} if and only if there exists $w\in W_d$ such that $\langle f(x),w\rangle=\langle f(y),w\rangle$.
\end{definition}

By Lemma \ref{lem:cont}, any two equivalent solutions $x,y$ are also weakly-equivalent, ; the other way, however, does not hold in general. For example, consider two objective vectors $f(x)=(1,2,3)$ and $f(y)=(1,3,2)$. Clearly, $x$ and $y$ are not equivalent; however, if $w=(1/3,1/3,1/3)$ we can see that $x$ and $y$ are indeed weakly-equivalent. In Fig.\ref{fig:two-parabolas}, points 10 and 12 are weakly-equivalent.

%%%%%%%%%%%%%%%%%%%%%%%%%%%%%%%%%%%%
%%%%%%%%%%%%%%%%%%%%%%%%%%%%%%%%%%%%
%%%%%%%%%%%%%%%%%%%%%%%%%%%%%%%%%%%%
\section{Quantum Adiabatic Computation}\label{sec:adiabatic}
Starting from this section we assume basic knowledge of quantum computation. For a thorough treatment of quantum information science we refer the reader to the book by Nielsen and Chuang\cite{NC00}.

Let $\mathcal{H}$ be a Hilbert space with a finite basis $\{\ket{u_i}\}_{i}$. For any vector $\ket{v}=\sum_{i} \alpha_i \ket{u_i}$, the $\ell_2$-norm of $\ket{v}$ is defined as $\| v\| = \sqrt{\sum_i|\alpha_i|^2}$. For any matrix $A$ acting on $\mathcal{H}$, we define the operator norm of $A$ induced by the $\ell_2$-norm as $\|A\|=\max_{\|v\|=1}\| A\ket{v}\|$.

The Hamiltonian of a quantum system gives a complete description of its time evolution, which is governed by the well-known Schr\" odinger's equation
\begin{equation}\label{eq:schrodinger}
i\hbar \frac{d}{dt} \ket{\Psi(t)}=H(t)\ket{\Psi(t)},
\end{equation}
where $H$ is a Hamiltonian, $\ket{\Psi(t)}$ is the state of the system at time $t$, Planck's constant is denoted by $\hbar$ and $i=\sqrt{-1}$. For simplicity, we will omit $\hbar$ and $i$ from now on. If $H$ is time-independent, it is easy to see that a solution to Eq.(\ref{eq:schrodinger}) is simply $\ket{\Psi(t)}=U(t)\ket{\Psi(0)}$ where $U(t)=e^{-itH}$ using $\ket{\Psi(0)}$ as a given initial condition. When the Hamiltonian depends on time, however, Eq.(\ref{eq:schrodinger}) is not in general easy to solve and much research is devoted to it; nevertheless, there are a few known special cases.

Say that a closed quantum system is described by a time-dependent Hamiltonian $H(t)$. If $\ket{\Psi(t)}$ is the minimum energy eigenstate of $H(t)$, adiabatic time evolution keeps the system in its lower energy eigenstate as long as the change rate of the Hamiltonian is ``slow enough.'' This natural phenomenon is formalized in the \emph{Adiabatic Theorem}, first proved in Ref. \cite{BF28}. Different proofs where given along the years, see for example Refs. \cite{Kat50,Mes62,SWL04,Rei04,AR04}. In this work we make use of a version of the theorem presented in Ref. \cite{AR04}.

Consider a time-dependent Hamiltonian $H(s)$, for $0\leq s \leq 1$, where $s=t/T$ so that $T$ controls the rate of change of $H$ for $t\in [0,T]$. We denote by $H'$ and $H''$ the first and second derivatives of $H$.
\begin{theorem}[Adiabatic Theorem \cite{BF28,Kat50,AR04}]\label{the:adiabatic}
Let $H(s)$ be a nondegenerate Hamiltonian, let $\ket{\psi(s)}$ be one of its eigenvectors and $\gamma(s)$ the corresponding eigenvalue. For any $\lambda\in \real^+$ and $s\in[0,1]$, assume that for any other eigenvalue $\hat{\gamma}(s)$ it holds that $|\gamma(s)-\hat{\gamma}(s)|> \lambda$. Consider the evolution given by $H$ on initial condition $\ket{\psi(0)}$ for time $T$ and let $\ket{\phi}$ be the state of the system at $T$. For any nonnegative $\delta\in \real$, if $T\geq \frac{10^5}{\delta^2}.\max\{\frac{\|H'\|^3}{\lambda^4},\frac{\|H'\|\cdot \|H''\|}{\lambda^3}\}$ then $\|\phi-\psi(1)\| \leq \delta$.
\end{theorem}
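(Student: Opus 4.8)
The statement is the standard quantum adiabatic theorem in the quantitative form of Ref.~\cite{AR04}, and the plan is to follow the elementary strategy: move to the instantaneous eigenbasis, reduce the claim to a bound on the amplitude that leaks out of the tracked eigenstate, write that leakage as an oscillatory integral, and integrate by parts once, using the uniform spectral gap $\lambda$ to control every resolvent factor and the large total time $T$ to extract smallness.

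First I would rescale time by $s=t/T$, so that Eq.(\ref{eq:schrodinger}) becomes $\frac{d}{ds}\ket{\Psi(s)}=-iT\,H(s)\ket{\Psi(s)}$ (in units $\hbar=1$), with $\ket{\Psi(0)}=\ket{\psi(0)}$ and $\ket{\phi}=\ket{\Psi(1)}$. Let $\{\ket{\psi_k(s)}\}_k$ be a smooth instantaneous eigenbasis of $H(s)$ with eigenvalues $\gamma_k(s)$, where $\ket{\psi_0(s)}=\ket{\psi(s)}$ and $\gamma_0=\gamma$. Expanding $\ket{\Psi(s)}=\sum_k c_k(s)\,e^{-iT\int_0^s \gamma_k(u)\,du}\ket{\psi_k(s)}$ and substituting into the rescaled equation yields a first-order system for the $c_k$ in which the only coupling out of the tracked level $0$ is through the overlaps $\bra{\psi_k}\psi_0'\rangle$. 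By the Hellmann--Feynman identity, $\bra{\psi_k}\psi_0'\rangle=\bra{\psi_k}H'\ket{\psi_0}/(\gamma_0-\gamma_k)$ for $k\neq 0$, and it is here that the hypotheses enter decisively: nondegeneracy makes the eigenbasis differentiable, and the uniform gap $|\gamma-\hat\gamma|>\lambda$ bounds each overlap by $\|H'\|/\lambda$.

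Next I would estimate the leakage $c_k(1)$ for $k\neq 0$. To leading order it equals $-\int_0^1 \frac{\bra{\psi_k}H'\ket{\psi_0}}{\gamma_0-\gamma_k}\,e^{-iT\int_0^s(\gamma_0-\gamma_k)\,du}\,ds$, i.e. a slowly varying amplitude integrated against a fast phase whose derivative $\gamma_0-\gamma_k$ is bounded below by $\lambda$ in absolute value. Writing $e^{-iT\int(\gamma_0-\gamma_k)}=\frac{1}{-iT(\gamma_0-\gamma_k)}\frac{d}{ds}e^{-iT\int(\gamma_0-\gamma_k)}$ and integrating by parts produces an explicit factor $1/T$, a boundary term of size $\order{\|H'\|/(T\lambda^2)}$, and a residual integral equal to $1/T$ times the integral of the derivative of the prefactor. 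Differentiating that prefactor creates exactly the two families of terms in the statement: one in which $H'$ is differentiated, giving $\|H''\|/\lambda^2$, and ones in which the gap or the eigenvectors are differentiated, each contributing a second factor of $H'$ and one more resolvent, giving $\|H'\|^2/\lambda^3$. Bounding the transition block $Q(s)H'(s)P(s)$ in operator norm (with $P(s)=\ket{\psi(s)}\bra{\psi(s)}$, $Q=I-P$) rather than level by level, and carrying the remaining factor $\|P'\|\le\|H'\|/\lambda$ from the frame change, one collects a bound on the total leakage whose two dominant contributions are $\frac{1}{T}\,\|H'\|^3/\lambda^4$ and $\frac{1}{T}\,\|H'\|\,\|H''\|/\lambda^3$; since the bounded quantity is a leakage probability, demanding it be at most $\delta^2$ is exactly the hypothesis $T\geq \frac{10^5}{\delta^2}\max\{\|H'\|^3/\lambda^4,\ \|H'\|\,\|H''\|/\lambda^3\}$ with the constant chosen generously, and then $\|\phi-\psi(1)\|\leq\delta$, where $\ket{\psi(1)}$ is understood up to the accumulated adiabatic phase.

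The main obstacle is the integration-by-parts bookkeeping rather than any single idea: one must check that every division by $\gamma_0-\gamma_k$ is legitimate (which is precisely what the uniform gap guarantees), that $\|H'\|$ and $\|H''\|$ are read as suprema over $s\in[0,1]$ so that the $\int_0^1 ds$ only costs a constant, and that the residual integrand is controlled uniformly after the single integration by parts. Pinning down the precise powers of $\lambda$ and the explicit constant $10^5$ then becomes routine estimation; the conceptual content is entirely in the one integration by parts against a phase whose derivative is bounded below by the gap, which is what converts slowness into the $1/T$ decay.
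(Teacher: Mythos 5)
The paper does not prove this statement at all: Theorem~\ref{the:adiabatic} is imported verbatim from the literature (the text explicitly says it uses ``a version of the theorem presented in Ref.~\cite{AR04}''), so there is no in-paper proof to compare yours against. Measured against the cited source, your route is genuinely different: \cite{AR04} proves the bound by discretizing $[0,1]$ into many small intervals and comparing the true evolution on each interval with the evolution under a frozen Hamiltonian, which is where the particular constant $10^5$ and the $\delta^2$ (rather than $\delta$) dependence on the error come from. You instead follow the classical Born--Fock/Messiah route: instantaneous eigenbasis, Hellmann--Feynman, one integration by parts against the dynamical phase. That route, done rigorously (as in Jansen--Ruskai--Seiler), does yield an adiabatic theorem, and since it gives a $1/T$ error it would in principle imply the stated (weaker) $1/T$-via-$\delta^2$ form; so the strategy is legitimate even though it is not the cited proof.

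There are, however, real gaps in the sketch as written. First, the step ``to leading order it equals $-\int_0^1 \frac{\bra{\psi_k}H'\ket{\psi_0}}{\gamma_0-\gamma_k}e^{-iT\int_0^s(\gamma_0-\gamma_k)du}\,ds$'' is where all the work lives: the exact equations couple every $c_k$ to every other, and bounding the difference between the true $c_k(1)$ and this first-order expression (e.g.\ by a Volterra iteration or by integrating by parts in the exact equation rather than in its first-order truncation) is the substantive content of any rigorous proof; asserting the leading-order formula and estimating it is not a proof. Second, your power counting is forced rather than derived: a single integration by parts on the transition amplitude produces terms of size $\|H'\|/(T\lambda^2)$, $\|H''\|/(T\lambda^2)$ and $\|H'\|^2/(T\lambda^3)$, not $\|H'\|^3/\lambda^4$ and $\|H'\|\|H''\|/\lambda^3$; the extra factor of $\|H'\|/\lambda$ in the theorem is an artifact of the discretization proof in \cite{AR04}, and the ``leakage probability $\le \delta^2$'' step conflates an amplitude bound with a probability bound. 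Third, the conclusion $\|\phi-\psi(1)\|\le\delta$ as literally stated is phase-sensitive, and your parenthetical ``understood up to the accumulated adiabatic phase'' quietly changes the statement rather than proving it. None of these is fatal to the strategy, but as written the argument would not compile into a proof of the quantitative statement.
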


The Adiabatic Theorem was used in \cite{FGG00} to construct a quantum algorithm for optimization problems and introduced a new paradigm in quantum computing known as \emph{quantum adiabatic computing}. In the following section, we briefly explain the quantum adiabatic algorithm and use it to solve MCOs.

%%%%%%%%%%%%%%%%%%%%%%%%%%%%%%%%%%%%
%%%%%%%%%%%%%%%%%%%%%%%%%%%%%%%%%%%%
%%%%%%%%%%%%%%%%%%%%%%%%%%%%%%%%%%%%
\section{The Quantum Adiabatic Algorithm}\label{sec:ada-algorithm}
Consider a function $f:\{0,1\}^n\to R\subseteq \real^+$ whose optimal solution $\bar{x}$ gives $f(\bar{x})=0$. Let $H_{1}$ be a Hamiltonian defined as
\begin{equation}
H_{1}=\sum_{x} f(x)\ket{x}\bra{x}.
\end{equation}
Notice that $H_1\ket{\bar{x}}=0$, and hence, $\ket{\bar{x}}$ is an eigenvector. Thus, an optimization problem reduces to finding an eigenstate with minimum eigenvalue \cite{FGG00}.
For any $s\in[0,1]$, let $H(s)=(1-s)H_0+sH_1$, where $H_0$ is an initial Hamiltonian chosen accordingly. If we initialize the system in the lowest energy eigenstate $\ket{\psi(0)}$, the adiabatic theorem guarantees that $T$ at least $1/\poly(\lambda)$ suffices to obtain a quantum state close to $\ket{\psi(1)}$, and hence, to our desired optimal solution. We call $H_1$ and $H_0$ the final and initial Hamiltonians, respectively.

After defining the initial and final Hamiltonians, the adiabatic theorem guarantees that we can find an optimal solution in finite time using the following procedure known as the \emph{Quantum Adiabatic Algorithm}. Let $H(s)=(1-s)H_0+sH_1$. Prepare the system in the ground-state $\ket{\psi(0)}$ of $H$. Then let the system evolve for time $t$ close to $T$. Finally, after time $t$, read-out the result by measuring the system in the computational basis. The only requirements, in order to make any use of the adiabatic algorithm, is that $H_0$ and $H_1$ must not commute and the total Hamiltonian $H(s)$ must be nondegenerate in its minimum eigenvalue \cite{FGG00}.

In this section we show how to construct the initial and final Hamiltonians for MCOs. Given any normal and collision-free MCO $\Pi_d^\lambda=(D,R,\mathcal{F})$ we will assume with no loss of generality that $D=\{0,1\}^n$, that is, $D$ is a set of binary words of length $n$. 

For each $i\in [1,d]_\integer$ define a Hamiltonian $H_{f_i}=\sum_{x\in\{0,1\}^n}f_i(x)\ket{x}\bra{x}$. The minimum eigenvalue of each $H_{f_i}$ is nondegenerate and 0 because $\Pi_d^\lambda$ is normal and collision-free. For any $w\in W_d$, the final Hamiltonian $H_{w}$ is defined as
\begin{align}
H_{w} &=w_1H_{f_1}+\cdots +w_dH_{f_d}\nonumber\\
    &=\sum_{x\in\{0,1\}^n} \big( w_1f_1(x)+\cdots + w_df_d(x) \big)\ket{x}\bra{x}\nonumber\\
    &=\sum_{x\in\{0,1\}^n} \langle f(x),w\rangle \ket{x}\bra{x}\label{eq:convex-comb}.
\end{align}

Following the work of \cite{FGG00}, we choose as initial Hamiltonian one that does not diagonalizes in the computational basis. Let $\ket{\hat{0}}=(\ket{0}+\ket{1})/\sqrt{2}$ and $\ket{\hat{1}}=(\ket{0}-\ket{1})/\sqrt{2}$. A quantum state $\ket{\hat{x}}$, for any $x\in \{0,1\}^n$, is obtained by applying the $n$-fold Walsh-Hadamard operation $F^{\otimes n}$ on $\ket{x}$. The set $\{\ket{\hat{x}}\}_{x\in\{0,1\}^n}$ is known as the Hadamard basis. The initial Hamiltonian is thus defined over the Hadamard basis as
\begin{equation}\label{eq:init-ham}
H_0=\sum_{x\in \{0,1\}^n} h(x)\ket{\hat{x}}\bra{\hat{x}},
\end{equation}
where $h(0^n)=0$ and $h(x)\geq 1$ for all $x\neq 0^n$. It is easy to see that the minimum eigenvalue is nondegenerate\footnote{In quantum physics, a Hamiltonian is degenerate when one of its eigenvalues has multiplicity greater than one.} with corresponding eigenstate $\ket{\hat{0}^n}=\frac{1}{\sqrt{2}}\sum_{x\in \{0,1\}^n}\ket{x}$.

For any vector $w$ in Euclidian space we define the $\ell_1$-norm of $w$ as $\|w\|_1=|w_1|+\cdots |w_d|$.

\begin{theorem}\label{the:alg}
Let $\Pi_d^\lambda$ be any normal and collision-free MCO. If there are no equivalent Pareto-optimal solutions, then for any $w\in W_d$ there exists $w'\in W_d$, satisfying $\|w-w'\|_1\leq 1/\poly(n)$, such that the quantum adiabatic algorithm, using $H_{w'}$ as final Hamiltonian, can find a Pareto-optimal solution $x$ corresponding to $w$ in finite time.
\end{theorem}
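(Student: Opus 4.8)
The plan is to verify that the hypotheses of the Adiabatic Theorem (Theorem~\ref{the:adiabatic}) can be forced to hold for the path Hamiltonian $H(s)=(1-s)H_0+sH_{w'}$ with a suitably perturbed weight $w'$, and then to read off a finite $T$ from that theorem. First I would fix $w\in W_d$ and, by Lemma~\ref{lem:sup}, let $M=\{x\in D:\langle f(x),w\rangle=\min_{y}\langle f(y),w\rangle\}$ be the set of minimizers of the linearization; every element of $M$ is Pareto-optimal. The obstruction to applying the theorem to $w$ directly is that the final Hamiltonian $H_w$ of Eq.(\ref{eq:convex-comb}) is diagonal with ground-space dimension exactly $|M|$, so when $|M|>1$ its minimum eigenvalue is degenerate. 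The purpose of $w'$ is to remove this degeneracy without leaving the set $M$.

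To build $w'$ I would invoke the standing hypothesis that no two Pareto-optimal solutions are equivalent: by Lemma~\ref{lem:cont} this forces the objective vectors $\{f(x):x\in M\}$ to be pairwise distinct, and in fact no difference $f(x)-f(y)$ with $x,y\in M$ is parallel to $(1,\dots,1)$ (else the two linearized values could not coincide). Hence a generic small tilt $\delta w$ in the tangent simplex $\{v:\sum_i v_i=0\}$ avoids the finitely many hyperplanes $\langle v,f(x)-f(y)\rangle=0$ and makes the values $\{\langle f(x),w+\delta w\rangle:x\in M\}$ pairwise distinct, singling out a unique minimizer $x^\star\in M$. Writing $\Delta=\min_{x\notin M}\langle f(x),w\rangle-\min_y\langle f(y),w\rangle>0$ (positive by collision-freeness and finiteness of $D$) and using $\langle f(x),w'-w\rangle\leq(\max_i f_i(x))\|w'-w\|_1\leq\poly(n)\cdot\|w'-w\|_1$ from the polynomial bound on the $f_i$, I would shrink $\|\delta w\|_1$ so that (i) no $x\notin M$ overtakes $x^\star$, making $x^\star$ the global minimizer of $H_{w'}$, and (ii) $\|w-w'\|_1\leq 1/\poly(n)$ with $w'\in W_d$. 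Then $H_{w'}$ is nondegenerate with ground state $\ket{x^\star}$, and $x^\star$ is a Pareto-optimal solution corresponding to $w$.

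The crux, and the step I expect to be the main obstacle, is guaranteeing that the spectral gap of $H(s)$ stays bounded away from zero for \emph{all} $s\in[0,1]$, not merely at the endpoints; the Adiabatic Theorem requires a uniform $\lambda$. I would settle this by a stoquasticity/Perron--Frobenius argument. Computing the matrix elements of $H_0$ from Eq.(\ref{eq:init-ham}) in the computational basis via the Walsh--Hadamard transform gives $\langle y|H_0|z\rangle=2^{-n}\sum_x h(x)(-1)^{x\cdot(y\oplus z)}$, and choosing $h$ (for instance $h(x)=1$ for all $x\neq 0^n$) so that every off-diagonal entry is strictly negative, I obtain that $H(s)$ has strictly negative off-diagonal entries for each $s\in[0,1)$ (the factor $(1-s)$ remains positive) while $H_{w'}$ is purely diagonal. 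Thus for large $c$ the matrix $cI-H(s)$ is real, symmetric, and entrywise positive, so by Perron--Frobenius its top eigenvalue---equivalently the minimum eigenvalue of $H(s)$---is simple for all $s\in[0,1)$; at $s=1$ the ground state is nondegenerate by the previous paragraph. Since the eigenvalues of $H(s)$ depend continuously on $s$, the gap $g(s)$ is positive on the compact interval $[0,1]$ and attains a minimum $\lambda'>0$.

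It then remains to assemble the pieces. Non-commutativity of $H_0$ and $H_{w'}$ is immediate, since $H_0$ has nonzero off-diagonal entries whereas $H_{w'}$ is diagonal and non-scalar. Taking $\ket{\psi(0)}=\ket{\hat 0^n}$ as the nondegenerate ground state of $H(0)=H_0$ and noting $H''(s)=0$ because $H(s)$ is affine in $s$, I would apply Theorem~\ref{the:adiabatic} with gap $\lambda'$ and any fixed tolerance $\delta<1/2$ to obtain a finite $T\geq\frac{10^5}{\delta^2}\cdot\frac{\|H_{w'}-H_0\|^3}{(\lambda')^4}$. Evolving for this $T$ and measuring in the computational basis then returns $x^\star$ with probability at least $1-\delta^2$, and since $x^\star\in M$ is Pareto-optimal, corresponds to $w$, and satisfies $\|w-w'\|_1\leq 1/\poly(n)$, this completes the argument.
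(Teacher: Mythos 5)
Your proposal is correct and follows the same overall strategy as the paper---perturb $w$ to a nearby $w'$ so that the final Hamiltonian $H_{w'}$ becomes nondegenerate in its ground state while its minimizer stays inside the set of $w$-optimal (hence Pareto-optimal, by Lemma~\ref{lem:sup}) solutions, then invoke the Adiabatic Theorem---but the two key technical steps are executed differently. For the tie-breaking, the paper's Lemma~\ref{lem:min} runs an induction on the number $\ell$ of degenerate minimizers, solving an explicit $2\times 2$ linear system in the base case and arguing the coefficient determinant is nonzero; you instead observe that the no-equivalent-solutions hypothesis makes the vectors $f(x)-f(y)$ nonzero and non-parallel to $(1,\dots,1)$, so a generic tilt in the tangent space of the simplex avoids the finitely many hyperplanes $\langle v,f(x)-f(y)\rangle=0$ and breaks all ties at once. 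This buys you a cleaner argument that sidesteps the bookkeeping in the paper's induction step (where one must also check that previously broken ties are not re-created and that the cumulative perturbation stays small); your quantitative control of the perturbation via the collision-free margin and $\|f(y)\|_1\leq md=\poly(n)$ plays exactly the role of the paper's Cauchy--Schwarz step with $\|w-w'\|_1\leq \langle\lambda,w\rangle/(md)$. For the finite-time conclusion, the paper asserts $g_{\min}>0$ by appealing to \cite{FGG00} and the remark that non-commutativity plus a nondegenerate final Hamiltonian suffices; you supply the underlying Perron--Frobenius argument (strictly negative off-diagonal entries of $(1-s)H_0$ in the computational basis make $cI-H(s)$ entrywise positive for $s<1$, so the ground state is simple along the whole path), which is the standard justification and makes the gap claim self-contained. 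Minor points you could tighten: the positivity constraint $w'\in[0,1)^d$ when some $w_i=0$ (easily handled since at least two coordinates of $w$ are positive), and noting that $\Delta>0$ already follows from finiteness of $D$ while collision-freeness is what gives the quantitative bound $\Delta>\langle\lambda,w\rangle$.
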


Note that if a linearization $w$ gives a nondegenerate Hamiltonian $H(s)$, we can directly use the adiabatic algorithm to find a Pareto-optimal solution. In the case of a degenerate Hamiltonian $H(s)$, Theorem \ref{the:alg} tell us that we can still find a Pareto-optimal solution using the adiabatic algorithm, provided we choose a new $w'$ sufficiently close to $w$.

%By Lemma \ref{lem:sup}, all supported solutions can be found by varying $w\in W_d$.

%To prove Theorem \ref{the:alg} we show in the following section that there always exists an appropriate $w$ that makes $H_{w}$ nondegenerate in its minimum eigenvalue.

%%%%%%%%%%%%%%%%%%%%%%%%%%%%%%%%%%%%
%%%%%%%%%%%%%%%%%%%%%%%%%%%%%%%%%%%%
%%%%%%%%%%%%%%%%%%%%%%%%%%%%%%%%%%%%
\section{Eigenspectrum of the Final Hamiltonian}\label{sec:convex}
In this section we prove Theorem \ref{the:alg}. Note that if the initial Hamiltonian does not commute with the final Hamiltonian, it suffices to prove that the final Hamiltonian is nondegenerate in its minimum eigenvalue \cite{FGG00}.
For the remaining of this work, we let $\sigma_w$ and $\alpha_w$ be the smallest and second smallest eigenvalues of $H_w$ corresponding to a normal and collision-free MCO $\Pi_d^\lambda=(D,R,\mathcal{F})$.

\begin{lemma}\label{lem:min-eigenvalue}
Let $x$ be a non-trivial Pareto-optimal solution of $\Pi_d^\lambda$. For any $w\in W_d$ it holds that $\sigma_w > \langle w,\lambda\rangle$.
\end{lemma}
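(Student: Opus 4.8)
The plan is to exploit that $H_w$ is diagonal in the computational basis, so its eigenvalues are exactly the scalar products $\langle f(y),w\rangle$ ranging over $y\in D$, and hence $\sigma_w=\min_{y\in D}\langle f(y),w\rangle$. Since the desired bound is phrased in terms of the non-trivial Pareto-optimal solution $x$, I would read the statement as asserting that this minimum is attained at $x$ (the Pareto-optimal solution that $w$ ``corresponds to'' in the sense of Theorem \ref{the:alg}), so that $\sigma_w=\langle f(x),w\rangle$. The whole task then reduces to proving the pointwise inequality $\langle f(x),w\rangle>\langle w,\lambda\rangle$ for a non-trivial $x$.

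The heart of the argument is to show that non-triviality, together with the normal and collision-free hypotheses, forces $f_i(x)>\lambda_i$ in every coordinate $i$. First, because $\Pi_d^\lambda$ is normal (hence well-formed), each $f_i$ has a unique minimizer $\bar{x}_i\in D$ with $f_i(\bar{x}_i)=0$, and since $R\subseteq\real^{+}$ this value $0$ is the global minimum of $f_i$. As $x$ is a non-trivial Pareto-optimal solution, by definition $x$ is not an optimal solution of any $f_i$, so $x\neq \bar{x}_i$ for all $i\in[1,d]_{\integer}$. I then invoke the collision-free property on the distinct pair $x,\bar{x}_i$: it yields $|f_i(x)-f_i(\bar{x}_i)|>\lambda_i$, i.e. $|f_i(x)|>\lambda_i$; and because $f_i(x)\geq 0$ this is exactly $f_i(x)>\lambda_i$.

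Summing against the weights finishes the proof. Since each $w_i\geq 0$, $\sum_{i=1}^d w_i=1$, and $f_i(x)-\lambda_i>0$ for every $i$, we obtain
\[
\langle f(x),w\rangle-\langle w,\lambda\rangle=\sum_{i=1}^d w_i\bigl(f_i(x)-\lambda_i\bigr)>0,
\]
so that $\sigma_w=\langle f(x),w\rangle>\langle w,\lambda\rangle$, as claimed. The step I expect to require the most care is the reading of the statement itself: a \emph{fixed} non-trivial $x$ does not realize $\sigma_w$ for \emph{every} $w\in W_d$ (for $w$ near a vertex of the simplex the minimizer is typically a trivial optimizer $\bar{x}_i$, where the coordinate $f_i(\bar{x}_i)=0$ makes the bound fail), so the intended meaning must be that $x$ is precisely the minimizer attaining $\sigma_w$, which I would state explicitly. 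The remaining points are routine bookkeeping: confirming via $R\subseteq\real^{+}$ that $0$ is the global minimum of each $f_i$ (so $|f_i(x)|=f_i(x)$), and observing that strictness survives the weighted sum because $\sum_i w_i=1$ guarantees at least one positive weight.
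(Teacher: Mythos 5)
Your proof is correct and follows essentially the same route as the paper's: identify $\sigma_w$ with $\langle f(x),w\rangle$ and bound the weighted sum term by term using $f_i(x)>\lambda_i$. In fact you supply the one step the paper leaves implicit --- that non-triviality, normality, and the collision-free property applied to the pair $x,\bar{x}_i$ force $f_i(x)>\lambda_i$ in every coordinate --- and your reading of the statement (that $x$ is the minimizer attaining $\sigma_w$) matches exactly how the paper's own proof opens, by setting $\sigma_w=w_1f_1(x)+\cdots+w_df_d(x)$.
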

\begin{proof}
Let $\sigma_w=w_1f_1(x)+\dots + w_d f_d(x)$ and let $x$ be a non-trivial Pareto-optimal element. For each $w_i \in N$ we have that
\[
\sigma_w=\sum_{i} w_i f_i(x)> \sum_{i}w_i\lambda_i=\langle w,\lambda\rangle.\]
\end{proof}

\begin{lemma}\label{lem:eigen-gap}
For any $w\in W_d$, let $H_w$ be a Hamiltonian with a nondegenerate minimum eigenvalue. The eigenvalue gap between the smallest and second smallest eigenvalues of $H_w$ is at least $\langle \lambda,w\rangle$.
\end{lemma}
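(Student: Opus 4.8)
The plan is to exploit the fact that $H_w$ in Eq.~(\ref{eq:convex-comb}) is already diagonal in the computational basis: its eigenvalues are exactly the numbers $\langle f(x),w\rangle$ as $x$ ranges over $\{0,1\}^n$, with $\ket{x}$ the corresponding eigenvectors. Hence $\sigma_w=\min_{x\in D}\langle f(x),w\rangle$ is attained at some $x^*\in D$, and $\alpha_w$ is the next value in the sorted list of these numbers, attained at some $y^*\neq x^*$. The hypothesis that the minimum eigenvalue is nondegenerate guarantees $\sigma_w<\alpha_w$, so $x^*$ is the unique minimizer and $y^*$ is well defined. By Lemma~\ref{lem:sup}, $x^*$ is Pareto-optimal, which I expect to be the lever that controls the relationship between $x^*$ and $y^*$.

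With this set-up the gap unfolds coordinatewise:
\[
\alpha_w-\sigma_w=\langle f(y^*),w\rangle-\langle f(x^*),w\rangle=\sum_{i=1}^{d} w_i\bigl(f_i(y^*)-f_i(x^*)\bigr).
\]
Since $x^*\neq y^*$ and $\Pi_d^\lambda$ is collision-free, each coordinate obeys $|f_i(y^*)-f_i(x^*)|>\lambda_i$. If every difference $f_i(y^*)-f_i(x^*)$ were nonnegative, i.e.\ if $x^*\prec y^*$, then each summand would exceed $w_i\lambda_i$ and adding them would yield $\alpha_w-\sigma_w>\sum_i w_i\lambda_i=\langle\lambda,w\rangle$, which is the desired bound. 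This is the same pairing idea used in Lemma~\ref{lem:min-eigenvalue}, where collision-free was applied to $x$ together with the zero of $f_i$; here the relevant pair is the two lowest-lying states $x^*,y^*$.

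The hard part, and the step I expect to be the main obstacle, is controlling the signs of the coordinate differences. A priori $x^*$ and $y^*$ need only be non-comparable ($x^*\sim y^*$), in which case some differences are negative and the crude coordinatewise bound can fail. The real content is therefore to argue that the second-smallest state is Pareto-comparable to (and lies above) the minimizer, or else to isolate the coordinates on which $y^*$ improves on $x^*$ and absorb their contribution. I would try to force this comparability using the standing hypothesis of Theorem~\ref{the:alg} that there are \emph{no equivalent Pareto-optimal solutions}, combined with the minimality of $\langle f(x^*),w\rangle$ at the Pareto-optimal point $x^*$; alternatively, one can restrict attention to $x^*,y^*$ lying along a monotone chain of the Pareto order, where the collision-free separations all point the same way and sum to at least $\langle\lambda,w\rangle$.
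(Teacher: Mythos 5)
Your setup coincides with the paper's: diagonalize $H_w$ in the computational basis, write $\sigma_w=\langle f(x^*),w\rangle$ and $\alpha_w=\langle f(y^*),w\rangle$, and expand the gap coordinatewise as $\sum_i w_i\bigl(f_i(y^*)-f_i(x^*)\bigr)$. The obstacle you flag is real and you do not overcome it, so as a proof the proposal is incomplete: you observe that the bound $\alpha_w-\sigma_w>\langle\lambda,w\rangle$ follows only if every difference $f_i(y^*)-f_i(x^*)$ is positive, i.e.\ only if $x^*\prec y^*$, and you then list candidate strategies (invoking the no-equivalent-solutions hypothesis, restricting to a monotone chain) without carrying any of them out. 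None of them can succeed in general, because the comparability you need is simply false: take $d=2$, $\lambda=(1,1)$, $w=(1/2,1/2)$, and a normal collision-free instance containing points with $f(x^*)=(0,10)$, $f(y^*)=(2,8.5)$ and $f(z)=(20,0)$. All pairwise coordinate differences exceed the corresponding $\lambda_i$, the minimum eigenvalue $\sigma_w=5$ is nondegenerate, yet $\alpha_w-\sigma_w=0.25<1=\langle\lambda,w\rangle$, since $x^*\sim y^*$ and the negative coordinate nearly cancels the positive one.

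You should know that the paper's own proof makes precisely the leap you declined to make: it writes $w_1f_1(y)-w_1f_1(x)+w_2f_2(y)-w_2f_2(x)>w_1\lambda_1+w_2\lambda_2$, silently replacing the collision-free condition $|f_i(x)-f_i(y)|>\lambda_i$ by the signed inequality $f_i(y)-f_i(x)>\lambda_i$ in every coordinate (and, incidentally, only treats $d=2$). So your diagnosis identifies a genuine flaw in the lemma as stated rather than a gap peculiar to your attempt; the statement needs an additional hypothesis (e.g.\ that the two lowest-lying states are Pareto-comparable, or a reformulation of collision-freeness at the level of the scalarized values $\langle f(\cdot),w\rangle$) before any proof along these lines can close.
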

\begin{proof}
Let $\sigma_w$ be the unique minimum eigenvalue of $H_w$. We have that
$
\sigma_w = \langle f(x),w \rangle
$
for some $x\in \{0,1\}^n$. Now let $\alpha_w=\langle f(y),w \rangle$ be a second smallest eigenvalue of $H_w$ for some $y\in \{0,1\}^n$ where $y\neq x$. Hence,
\begin{align*}
\alpha_w-\sigma_w	&=\langle f(y),w \rangle - \langle f(x),w \rangle\\
				&=w_1f_1(y)-w_1f_1(x)+w_2f_2(y)-w_2f_2(x)\\
				&>w_1\lambda 1+w_2\lambda_2\\
				&=\langle \lambda,w\rangle.
\end{align*}
\end{proof}

\begin{lemma}\label{lem:suff-cond}
If there are no weakly-equivalent Pareto-optimal solutions in $\Pi_d^\lambda$, then the Hamiltonian $H_w$ is non-degenerate in its minimum eigenvalue.
\end{lemma}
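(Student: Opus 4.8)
The plan is to argue by contradiction, exploiting the fact that $H_w$ is diagonal in the computational basis. Since $H_w=\sum_{x\in\{0,1\}^n}\langle f(x),w\rangle\ket{x}\bra{x}$, its eigenvalues are exactly the linearized values $\langle f(x),w\rangle$ ranging over $x\in\{0,1\}^n$, with the computational basis states as eigenvectors. Consequently, degeneracy of the minimum eigenvalue $\sigma_w$ is equivalent to the existence of two \emph{distinct} words $x\neq y$ in $\{0,1\}^n=D$ that both attain $\min_{z\in D}\langle f(z),w\rangle=\sigma_w$.

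First I would suppose, toward a contradiction, that $\sigma_w$ is degenerate, and fix two such distinct minimizers $x$ and $y$. The key step is to invoke Lemma \ref{lem:sup}: because $x$ attains the minimum of the linearization $\langle f(\cdot),w\rangle$, it is a Pareto-optimal solution of $\Pi_d^\lambda$; applying the same argument to $y$ shows that $y$ is Pareto-optimal as well. Hence both minimizers lie in $P(\Pi)$.

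Finally, since $\langle f(x),w\rangle=\langle f(y),w\rangle=\sigma_w$ for this particular $w\in W_d$, the definition of weak equivalence immediately yields that the distinct Pareto-optimal solutions $x$ and $y$ are weakly-equivalent. This contradicts the hypothesis that $\Pi_d^\lambda$ has no weakly-equivalent Pareto-optimal solutions. Therefore no such pair can exist, $\sigma_w$ is attained by a unique $x$, and $H_w$ is nondegenerate in its minimum eigenvalue.

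I do not expect a serious obstacle: the entire content is packaged in Lemma \ref{lem:sup} (every minimizer of a linearization is Pareto-optimal) together with the elementary observation that $H_w$ is diagonal in the computational basis. The only point requiring a little care is to ensure the two minimizers furnished by degeneracy are genuinely distinct domain elements --- which is exactly what a minimum eigenvalue of multiplicity greater than one provides --- so that they constitute a legitimate pair of weakly-equivalent Pareto-optimal solutions and thereby trigger the contradiction.
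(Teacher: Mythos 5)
Your proof is correct and follows essentially the same route as the paper's: the paper also argues by contraposition that a degenerate minimum eigenvalue of the diagonal operator $H_w$ yields two distinct minimizers $x\neq y$ with $\langle f(x),w\rangle=\langle f(y),w\rangle=\sigma_w$, hence a weakly-equivalent pair. If anything, your write-up is slightly more complete, since you explicitly invoke Lemma \ref{lem:sup} to certify that both minimizers are Pareto-optimal---a step the paper's two-line proof leaves implicit but which is needed to contradict the hypothesis as stated.
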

\begin{proof}
By the contrapositive, suppose $H_w$ is degenerate in its minimum eigenvalue $\sigma_w$. Take any two degenerate minimal eigenstates $\ket{x}$ and $\ket{y}$, with $x\neq y$, such that
\[
w_1f_1(x) +\cdots+w_df_d(x)=w_1f_1(y)+\cdots +w_df_2(d)=\sigma_w.
\]
Then it holds that $x$ and $y$ are weakly-equivalent.
\end{proof}

We further show that even if $\Pi_d^\lambda$ has weakly-equivalent Pareto-optimal solutions, we can find a nondegenerate Hamiltonian. Let $m=\max_{x,i}\{f_i(x)\}$.

\begin{lemma}\label{lem:min}
For any $\Pi_d^\lambda$, let $x_1,\dots,x_\ell\in D$ be Pareto-optimal solutions that are not pairwise equivalent. If there exists $w\in W_d$ and $\sigma_w \in \real^+$ such that $\langle f(x_1),w\rangle = \cdots =\langle f(x_\ell),w\rangle=\sigma_w$ is minimum among all $y\in D$, then there exists $w'\in W_d$ and $i\in[1,\ell]_\integer$ such that for all $j\in[1,\ell]_\integer$, with $j\neq i$, it holds $\langle f(x_i),w'\rangle <\langle f(x_j),w'\rangle$. Additionally, if the linearization $w'$ satisfies  $\|w-w'\|_1 \leq \frac{\langle \lambda, w\rangle}{md}$, then $\langle f(x_i),w'\rangle$ is unique and minimum among all $\langle f(y),w'\rangle$ for $y\in D$.
\end{lemma}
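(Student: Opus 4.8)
The plan is to prove the two assertions in turn: first the bare existence of a tie-breaking linearization $w'$, and then the fact that it can be taken close to $w$ (within the stated cap) without disturbing global minimality.

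For the first assertion I would argue by genericity in $W_d$. Since $x_1,\dots,x_\ell$ are pairwise non-equivalent, Lemma \ref{lem:cont} says that for each pair $a\neq b$ the functional $u\mapsto\langle f(x_a)-f(x_b),u\rangle$ is not identically zero on $W_d$, so its zero set $Z_{ab}=\{u\in W_d:\langle f(x_a),u\rangle=\langle f(x_b),u\rangle\}$ is a nowhere-dense (lower-dimensional, possibly empty) slice of the $(d-1)$-dimensional set $W_d$. The finite union $\bigcup_{a<b}Z_{ab}$ is therefore nowhere dense, so its complement is open and dense and meets every neighbourhood of $w$ inside $W_d$. Choosing $w'$ in this complement makes the values $\langle f(x_1),w'\rangle,\dots,\langle f(x_\ell),w'\rangle$ pairwise distinct; letting $x_i$ be the index attaining the smallest of them gives the strict inequalities $\langle f(x_i),w'\rangle<\langle f(x_j),w'\rangle$ for all $j\neq i$.

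For the second assertion I would use a perturbation estimate. Taking $x_1,\dots,x_\ell$ to be exactly the minimizers of $\langle f(\cdot),w\rangle$, set $g=\min\{\langle f(y),w\rangle-\sigma_w:y\in D\setminus\{x_1,\dots,x_\ell\}\}$, which is strictly positive since $D$ is finite. For any $y$ and any $w'$, the bound $|\langle a,b\rangle|\le\|a\|_\infty\|b\|_1$ together with $\|f(y)\|_\infty\le m$ yields $|\langle f(y),w'\rangle-\langle f(y),w\rangle|\le m\|w-w'\|_1$. Hence, whenever $\|w-w'\|_1<g/(2m)$, every non-minimizer $y$ still obeys $\langle f(y),w'\rangle>\sigma_w+g/2>\langle f(x_i),w'\rangle$, while the first assertion keeps $x_i$ strictly below the remaining $x_j$; so $\langle f(x_i),w'\rangle$ is the unique global minimum. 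Because the tie-breaking complement is dense near $w$, such a $w'$ can be chosen with $\|w-w'\|_1$ as small as desired, in particular at most $\frac{\langle\lambda,w\rangle}{md}$.

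The main obstacle is reconciling the prescribed cap $\frac{\langle\lambda,w\rangle}{md}$ with the gap $g$: since $g$ can be arbitrarily small (two integer objective vectors may project almost exactly onto $\sigma_w$), one cannot assert that \emph{every} $w'$ within that cap preserves minimality, only that a suitable one does, obtained by also shrinking the perturbation below $g/(2m)$; I would therefore state and prove the conclusion in this existential form, checking that the two constraints on $\|w-w'\|_1$ are simultaneously satisfiable thanks to density. The role of the specific cap is then downstream stability of the spectral gap: from $|\langle\lambda,w'-w\rangle|\le\|\lambda\|_\infty\|w-w'\|_1$ the cap forces $\langle\lambda,w'\rangle$ to stay within a small fraction of $\langle\lambda,w\rangle$, so that Lemma \ref{lem:eigen-gap}, applied to the now nondegenerate $H_{w'}$, still gives an eigenvalue gap comparable to $\langle\lambda,w\rangle$ and keeps the adiabatic running time under control; I would make this final estimate precise.
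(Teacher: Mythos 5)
Your proof is correct, but it takes a genuinely different route from the paper's. For the tie-breaking step the paper argues by induction on $\ell$: in the base case $\ell=2$ it freezes $w_3,\dots,w_d$, observes that the two equations $\langle f(x_1),w\rangle=\langle f(x_2),w\rangle=\sigma_w$ reduce to a $2\times 2$ linear system with (it claims) nonzero determinant and hence a unique solution $(w_1,w_2)$, and then picks any $(w_1',w_2')\neq(w_1,w_2)$ with the same total weight; the inductive step reduces a fresh tie between $x_{\ell+1}$ and the current winner back to the base case. Your genericity argument --- each tie set $Z_{ab}$ is a proper affine slice of $W_d$ by Lemma \ref{lem:cont}, so the finite union is nowhere dense and its complement is dense near $w$ --- handles all $\ell$ at once, needs no induction, and sidesteps the paper's unjustified determinant claim (the determinant $f_1(x_1)f_2(x_2)-f_1(x_2)f_2(x_1)$ can perfectly well vanish). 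For the minimality step the paper supposes a competitor $y$ with $\langle f(y),w'\rangle\le\langle f(x_1),w'\rangle$, asserts via Lemma \ref{lem:min-eigenvalue} that $|\langle f(x_1),w\rangle-\langle f(y),w\rangle|>\langle\lambda,w\rangle$, and contradicts this with the bound $|\langle f(y),w'-w\rangle|\le\|f(y)\|_1\cdot\|w'-w\|_1\le md\cdot\frac{\langle\lambda,w\rangle}{md}$; you instead work with the true finite gap $g$ to the nearest non-minimizer and correctly halve it. Your caution about the cap is well placed: since Pareto-optimal (hence non-comparable) solutions have objective differences of mixed signs, collision-freeness does not force $\langle f(y),w\rangle-\sigma_w>\langle\lambda,w\rangle$, so the universal reading ``every $w'$ within the cap works'' is not actually secured by the paper's argument either, whereas your existential formulation (a suitable $w'$ exists within any prescribed radius, in particular within $\frac{\langle\lambda,w\rangle}{md}$) is both proved and is exactly what Lemma \ref{lem:min-eigen} uses downstream. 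The closing remark about spectral-gap stability is extraneous to this lemma and can be dropped.
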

\begin{proof}
We prove the lemma by induction on $\ell$. Let $\ell=2$, then $\langle f(x_1),w\rangle =\langle f(x_2),w\rangle$, and hence,
\begin{equation}
\begin{aligned}\label{eq:system-original}
w_1f_1(x_1)+\cdots +w_df_d(x_1)	&=\sigma_w\\
w_1f_1(x_2)+\cdots +w_df_d(x_2)	&=\sigma_w.
\end{aligned}
\end{equation}
for some $\sigma_w\in \real^+$. From linear algebra we know that there is an infinite number of elements of $W_d$ that simultaneously satisfy Eq.(\ref{eq:system-original}). With no loss of generality, fix $w_3,\dots,w_d$ and set $b_1=w_3f_3(x_1)+\cdots +w_df_d(x_1)$ and $b_2=w_3f_3(x_2)+\cdots +w_df_d(x_2)$. We have that
\begin{equation}\label{eq:system}
\begin{aligned}
w_1f_1(x_1) +w_2f_2(x_1)	&=\sigma_w-b_1\\
w_1f_1(x_2) +w_2f_2(x_2)	&=\sigma_w-b_2.
\end{aligned}
\end{equation}
Again, by linear algebra, we know that Eq.(\ref{eq:system}) has a unique solution  $w_1$ and $w_2$; it suffices to note that the determinant of the coefficient matrix of Eq.(\ref{eq:system}) is not 0.

Choose any $w_1'\neq w_1$ and $w_2'\neq w_2$ satisfying $w_1'+w_2'+w_3+\cdots +w_d=1$ and let $w'=(w_1',w_2',w_3,\dots,w_d)$. Then we have that $\langle f(x_1),w'\rangle \neq \langle f(x_2),w'\rangle$ because $w_1'$ and $w_2'$ are not solutions to Eq.(\ref{eq:system}). Hence, either $\langle f(x_1),w'\rangle$ or $\langle f(x_2),w'\rangle$ must be smaller than the other.

Suppose that $\langle f(x_1),w'\rangle < \langle f(x_2),w'\rangle$. We now claim that $\langle f(x_1),w'\rangle$ is mininum and unique among all $y\in D$. In addition to the constraint of the preceding paragraph that $w'$ must satisfy, in order for $\langle f(x_1),w'\rangle$ to be minimum, we must choose $w'$ such that $\| w- w'\|_1\leq \frac{\langle \lambda, w\rangle}{md}$.

Assume for the sake of contradiction the existence of $y\in D$ such that $\langle f(y),w'\rangle \leq \langle f(x_1),w'\rangle$. Hence,
\[
\langle f(y),w'\rangle \leq \langle f(x_1),w\rangle < \langle f(y),w\rangle.
\]
From Lemma \ref{lem:min-eigenvalue}, we know that $|\langle f(x_1),w\rangle - \langle f(y),w\rangle| > \langle \lambda,w\rangle$, and thus,
\begin{equation}\label{eq:contradiction}
|\langle f(y),w'\rangle - \langle f(y),w\rangle| > \langle \lambda,w\rangle.
\end{equation}
Using the Cauchy-Schwarz inequality we have that
\begin{align*}
|\langle f(y),w'\rangle - \langle f(y),w\rangle|	&=|\langle f(y),w'-w\rangle|\\
									&\leq \|f(y)\|_1 \cdot \|w'-w\|_1\\
									&\leq \langle \lambda,w\rangle,
\end{align*}
where the last line follows from $\|f(y)\|_1\leq md$ and $\| w- w'\|_1\leq \frac{\langle \lambda, w\rangle}{md}$; from Eq.(\ref{eq:contradiction}), however, we have that $|\langle f(y),w'-w\rangle|>\langle \lambda,w\rangle$, which is a contradiction. Therefore, we conclude that $\langle f(y),w'\rangle > \langle f(x_1),w'\rangle$ for any $y\in D$; the case for $\langle f(x_1),w'\rangle > \langle f(x_2),w'\rangle$ can be proved similarly. The base case of the induction is thus proved.

Now suppose the statement holds for $\ell$. Let $x_1,\dots,x_\ell,x_{\ell+1}$ be Pareto-optimal solutions that are not pairwise equivalent. Let $w\in W_d$ be such that $\langle f(x_1),w\rangle = \cdots =\langle f(w_{\ell+1}),w\rangle$ holds. By our induction hypothesis, there exists $w'\in W_d$ and $i\in [1,\ell]_\integer$ such that $\langle f(x_i),w'\rangle  < \langle f(y),w' \rangle$ for any other $y\in D$.

If $ \langle f(x_i),w'\rangle \neq \langle f(x_{\ell+1}),w'\rangle$ then we are done, because either one must be smaller. Suppose, however, that $\langle f(x_{\ell+1}),w'\rangle= \langle f(x_i),w'\rangle=\sigma_{w'}$ for some $\sigma_{w'} \in \real^+$. From the base case of the induction we know there exists $w''\neq w'$ that makes $ \langle f(x_i),w''\rangle < \langle f(x_{\ell+1}),w''\rangle$, and hence, $\langle f(x_i),w''\rangle < \langle f(y),w''\rangle$ for any $y\in D$.
\end{proof}

The premise in Lemma \ref{lem:min}, that each $x_1,\dots,x_\ell$ must be Pareto-optimal solutions, is a sufficient condition because if one solution is not Pareto-optimal, then the statement will contradict Lemma \ref{lem:sup}.

We now apply Lemma \ref{lem:min} to find a Hamiltonian with a nondegenerate minimum eigenvalue.

\begin{lemma}\label{lem:min-eigen}
Let $\Pi_d^\lambda$ be a MCO with no equivalent Pareto-optimal solutions and let $H_w$ be a degenerate Hamiltonian in its minimum eigenvalue with corresponding minimum eigenstates $\ket{x_1},\dots,\ket{x_\ell}$. There exists $w'\in W_d$, satisfying $\|w-w'\|_1\leq \frac{\langle \lambda,w\rangle}{md}$, and $i\in[1,\ell]_\integer$ such that $H_{w'}$ is nondegenerate in its smallest eigenvalue with corresponding eigenvector $\ket{x_i}$.
\end{lemma}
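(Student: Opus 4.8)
The plan is to read this lemma as the Hamiltonian-language translation of Lemma \ref{lem:min}, so that the bulk of the work reduces to verifying that the hypotheses of that lemma apply to the ground states of $H_w$. First I would record the eigenstructure of $H_w$: from Eq.(\ref{eq:convex-comb}), $H_w$ is diagonal in the computational basis with $H_w\ket{x}=\langle f(x),w\rangle\ket{x}$, so its eigenvalues are exactly the linearized objective values $\langle f(x),w\rangle$ and its eigenvectors are the $\ket{x}$. Degeneracy of the minimum eigenvalue therefore means precisely that $\langle f(x_1),w\rangle=\cdots=\langle f(x_\ell),w\rangle=\sigma_w$ is minimal among all $y\in D$, which is exactly the starting hypothesis of Lemma \ref{lem:min}.

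Next I would check the two remaining hypotheses of Lemma \ref{lem:min}. Since each $x_k$ attains $\min_{y\in D}\langle f(y),w\rangle$, Lemma \ref{lem:sup} guarantees that every $x_k$ is a Pareto-optimal solution of $\Pi_d^\lambda$. Because $\Pi_d^\lambda$ has no equivalent Pareto-optimal solutions by assumption, no two of the $x_k$ can be equivalent; hence $x_1,\dots,x_\ell$ are pairwise non-equivalent, exactly as Lemma \ref{lem:min} requires.

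With the hypotheses in place, I would invoke Lemma \ref{lem:min} to obtain an index $i\in[1,\ell]_\integer$ and a linearization $w'\in W_d$ satisfying $\|w-w'\|_1\leq\frac{\langle\lambda,w\rangle}{md}$ for which $\langle f(x_i),w'\rangle$ is strictly smaller than $\langle f(y),w'\rangle$ for every other $y\in D$. Translating back through the eigenstructure above, this says that $\sigma_{w'}=\langle f(x_i),w'\rangle$ is the unique smallest eigenvalue of $H_{w'}$, with associated eigenvector $\ket{x_i}$; that is, $H_{w'}$ is nondegenerate in its minimum eigenvalue, which is the claim.

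The genuine combinatorial content, namely perturbing $w$ to break the tie while controlling $\|w-w'\|_1$ through the collision gap $\langle\lambda,w\rangle$, has already been carried out in Lemma \ref{lem:min}, so no new estimate is needed here. The only point that requires care is the hypothesis check in the second step: one must confirm that the degenerate ground states are Pareto-optimal, so that Lemma \ref{lem:sup} applies, and pairwise non-equivalent, so that the tie-breaking of Lemma \ref{lem:min} is possible at all, since by Lemma \ref{lem:cont} two equivalent solutions share the same linearized value for every $w$ and thus can never be separated. This is therefore the main, though modest, obstacle.
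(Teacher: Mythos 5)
Your proposal is correct and follows essentially the same route as the paper: both reduce the statement to an application of Lemma \ref{lem:min} after observing that the degenerate ground states of $H_w$ are exactly the minimizers of $\langle f(\cdot),w\rangle$ and hence pairwise non-equivalent Pareto-optimal solutions. You are in fact slightly more careful than the paper in that you explicitly invoke Lemma \ref{lem:sup} to justify that the degenerate ground states are Pareto-optimal, a step the paper's proof takes for granted.
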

\begin{proof}
From Lemma \ref{lem:suff-cond}, we know that if $\Pi_d^\lambda$ has no weakly-equivalent Pareto-optimal solutions, then for any $w$ the Hamiltonian $H_w$ is nondegenerate.

We consider now the case when the minimum eigenvalue of $H_w$ is degenerate with $\ell$ Pareto-optimal solutions that are weakly-equivalent. Let  $x_1,\dots,x_{\ell}$ be such weakly-equivalent Pareto-optimal solutions that are non-trivial and $x_i \not\equiv x_j$ for all $i\neq j$. By Lemma \ref{lem:min} there exists $w'\in W_d$, where $w\neq w'$, such that $\langle f(x_i),w'\rangle$ is minimum among all $y \in D$.
\end{proof}

If we consider our assumption from Section \ref{sec:MCO} that $m=\poly(n)$, where $n$ is the maximum number of bits of any element in $D$, we have that any $w'$ must satisfy $\|w-w'\|_1\leq 1/\poly(n)$.  Then Theorem \ref{the:alg} follows immediately from lemmas \ref{lem:sup} and \ref{lem:min-eigen}.

To see that the adiabatic evolution takes finite-time let $\Delta_{max}=\max_s \| \frac{d}{ds}H(s) \|$ and $g_{min}=\min_s g(s)$, where $g(s)$ is the eigenvalue gap of $H(s)$. Letting $T=O(\frac{\Delta_{max}}{g_{min}^2})$ suffices to find a supported solution corresponding to $w$. Since $g_{min}>0$ and $\| \frac{d}{ds}H(s) \|=\poly(n)$, we conclude that $T$ is finite.

%%%%%%%%%%%%%%%%%%%%%%%%%%%%%%%%%%%%
%%%%%%%%%%%%%%%%%%%%%%%%%%%%%%%%%%%%
%%%%%%%%%%%%%%%%%%%%%%%%%%%%%%%%%%%%
\section{Application of the Adiabatic Algorithm to the Two-Parabolas Problem}\label{sec:two-parabolas}

To make use of the adiabatic algorithm of Section \ref{sec:ada-algorithm} in the Two-Parabolas problem we need to consider a collision-free version of the problem. Let $TP_2^\lambda=(D,R,\mathcal{F})$ be a normal and collision-free MCO where $\lambda=(\lambda_1,\lambda_2)\in \real^+ \times \real^+$, $D=\{0,1\}^n$, $R\subseteq \real^+$ and $\mathcal{F}=\{f_1,f_2\}$. Let $x_0$ and $x_0'$ be the optimal solutions of $f_1$ and $f_2$, respectively. We will use $x_i$ to indicate the $i$th solution of $f_1$ and $x_i'$ for $f_2$. Moreover, we assume that $|x_0 -x_0'| > 1$. This latter assumption will ensure that there is at least one non-trivial Pareto-optimal solution.

To make $TP_2^\lambda$ a Two-Parabolas problem, we impose the following conditions.
\begin{enumerate}
\item For each $x\in [0,x_0]$, the functions $f_1$ and $f_2$ are decreasing;
\item for each $x\in [x_0',2^n-1]$, the functions $f_1$ and $f_2$ are increasing;
\item for each $x\in [x_0+1,x_0'-1]$ , the function $f_1$ is increasing and the function $f_2$ is decreasing.
\end{enumerate}

The final and initial Hamiltonians are as in Eq.(\ref{eq:convex-comb}) and Eq.(\ref{eq:init-ham}), respectively. In particular, in Eq.(\ref{eq:init-ham}), we define the initial Hamiltonian as
\begin{equation}\label{eq:tp-h0}
\hat{H}_0=\sum_{x\in\{0,1\}^n\setminus \{0^n\}} \ket{\hat{x}}\bra{\hat{x}}.
\end{equation}
Thus, the Hamiltonian of the entire system for $TP_2^\lambda$ is
\begin{equation}\label{eq:ham-tp2}
H(s)=(1-s)\hat{H}_0+sH_{w}.
\end{equation}

From the previous section we know that $T=O(\frac{\Delta_{max}}{g_{min}^2})$ suffices to find a supported solution corresponding to $w$ \cite{DMV01}. The quantity $\Delta_{max}$ is usually easy to estimate. The eigenvalue gap $g_{min}$ is, however, very difficult to compute; indeed, determining for any  Hamiltonian if $g_{min}>0$ is undecidable \cite{CPW15}.

We present a concrete example of the Two-Parabolas problem on six qubits and numerically estimate the eigenvalue gap. In Fig.\ref{fig:tp-six} we show a discretized instance of the Two-Parabolas problem---Table \ref{tab:tp-spec} presents a complete specification of all points.

\begin{figure}
\centering
\includegraphics[scale=0.80]{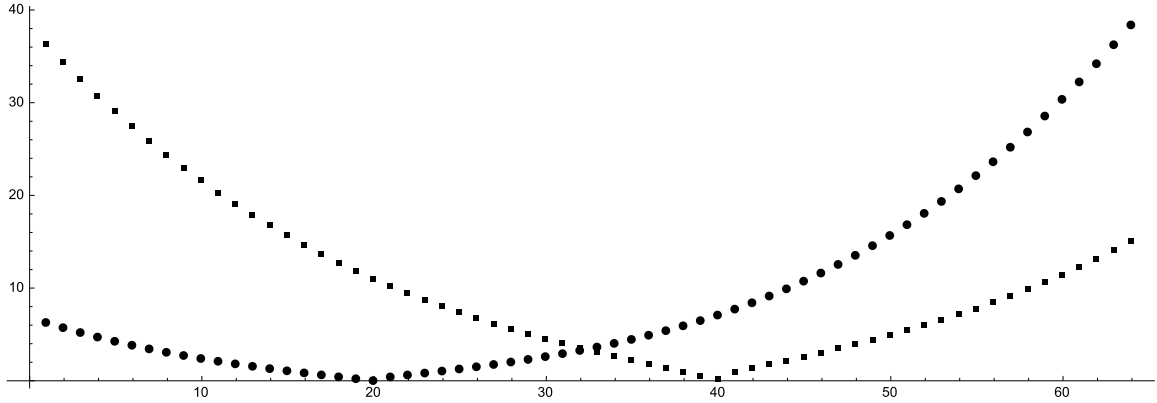}
\caption{A discrete Two-Parabolas problem on seven qubits. Each objective function $f_1$ and $f_2$ is represented by the rounded points and the squared points, respectively. The gap vector $\lambda=(0.2,0.4)$. The trivial Pareto-optimal points are 40 and 80.}
\label{fig:tp-six}
\end{figure}

\begin{table}
\centering
\caption{Complete definition of the Two-Parabolas example of Fig.\ref{fig:tp-six} for seven qubits.}
\resizebox{0.9\textwidth}{!}{\begin{minipage}{\textwidth}
\label{tab:tp-spec}
\begin{tabular}{||c|c|c||c|c|c||c|c|c||c|c|c||c}
\hline\hline
$x$	& $f_1(x)$		& $f_2(x)$ 	& $x$	& $f_1(x)$		& $f_2(x)$&
$x$	& $f_1(x)$		& $f_2(x)$ 	& $x$	& $f_1(x)$		& $f_2(x)$\\
\hline
1	& 36.14		& 214.879&
2	& 34.219		& 208.038&	
3	& 32.375		& 201.354&
4	& 30.606		& 194.825\\
5	& 28.91		& 188.449&
6	& 27.285		& 182.224&
7	& 25.729		& 176.148&
8	& 24.24		& 170.219\\
9	& 22.816		& 164.435&
10	& 21.455		& 158.794&
11	& 20.155		& 153.294&
12	& 18.914		& 147.933\\
13	& 17.73		& 142.709&
14	& 16.601		& 137.62&
15	& 15.525		& 132.664&
16	& 14.5		& 127.839\\
17	& 13.524		& 123.143&
18	& 12.595		& 118.574&
19	& 11.711		& 114.13&
20	& 10.87		& 109.809\\
21	& 10.07		& 105.609&
22	& 9.309		& 101.528&
23	& 8.585		& 97.564&
24	& 7.896		& 93.715\\
25	& 7.24 		& 89.979&
26	& 6.615		& 86.354&
27	& 6.019		& 82.838&
28	& 5.45		& 79.429\\
29	& 4.906		& 76.125&
30	& 4.385		& 72.924&
31	& 3.885		& 69.824&
32	& 3.404		& 66.823\\
33	& 2.94		& 63.919&
34	& 2.491		& 61.11&
35	& 2.055		& 58.394&
36	& 1.63		& 55.769\\
37	& 1.214		& 53.233&
38	& 0.805		& 50.784&
39	& 0.401		& 48.42&
40	& 0			& 46.139\\
41	& 0.801		& 43.939&
42	& 1.205		& 41.818&
43	& 1.614		& 39.774&
44	& 2.03		& 37.805\\
45	& 2.455		& 35.909&
46	& 2.891		& 34.084&
47	& 3.34		& 32.328&
48	& 3.804		& 30.639\\
49	& 4.285		& 29.015&
50	& 4.785		& 27.454&
51	& 5.306		& 25.954&
52	& 5.85		& 24.513\\
53	& 6.419		& 23.129&
54	& 7.015		& 21.8&
55	& 7.64		& 20.524&
56	& 8.296		& 19.299\\
57	& 8.985		& 18.123&
58	& 9.709		& 16.994&
59	& 10.47		& 15.91&
60	& 11.27		& 14.869\\
61	& 12.111		& 13.869&
62	& 12.995		& 12.908&
63	& 13.924		& 11.984&
64	& 14.9		& 11.095\\
65	& 15.925		& 10.239&
66	& 17.001		& 9.414	&
67	& 18.13		& 8.618	&
68	& 19.314		& 7.849	\\
69	& 20.555		& 7.105	&
70	& 21.855		& 6.384	&
71	& 23.216		& 5.684	&
72	& 24.64		& 5.003	\\
73	& 26.129		& 4.339	&
74	& 27.685		& 3.69	&
75	& 29.31		& 3.054	&
76	& 31.006		& 2.429	\\
77	& 32.775		& 1.813	&
78	& 34.619		& 1.204	&
79	& 36.54		& 0.6	&
80	& 38.54		& 0	\\
81	& 40.621		& 1.2	&
82	& 42.785		& 1.804	&
83	& 45.034		& 2.413	&
84	& 47.37		& 3.029	\\
85	& 49.795		& 3.654	&
86	& 52.311		& 4.29	&
87	& 54.92		& 4.939	&
88	& 57.624		& 5.603	\\
89	& 60.425		& 6.284	&
90	& 63.325		& 6.984	&
91	& 66.326		& 7.705	&
92	& 69.43		& 8.449	\\
93	& 72.639		& 9.218	&
94	& 75.955		& 10.014	&
95	& 79.38		& 10.839	&
96	& 82.916		& 11.695	\\
97	& 86.565		& 12.584	&
98	& 90.329		& 13.508	&
99	& 94.21		& 14.469	&
100	& 98.21		& 15.469	\\
101	& 102.331		& 16.51	&
102	& 106.575		& 17.594	&
103	& 110.944		& 18.723	&
104	& 115.44		& 19.899	\\
105	& 120.065		& 21.124	&
106	& 124.821		& 22.4	&
107	& 129.71		& 23.729	&
108	& 134.734		& 25.113	\\
109	& 139.895		& 26.554	&
110	& 145.195		& 28.054	&
111	& 150.636		& 29.615	&
112	& 156.22		& 31.239	\\
113	& 161.949		& 32.928	&
114	& 167.825		& 34.684	&
115	& 173.85		& 36.509	&
116	& 180.026		& 38.405	\\
117	& 186.355		& 40.374	&
118	& 192.839		& 42.418	&
119	& 199.48		& 44.539	& 
120	& 206.28		& 46.739	\\
121	& 213.241		& 49.02	&
122	& 220.365		& 51.384	&
123	& 227.654		& 53.833	&
124	& 235.11		& 56.369	\\
125	& 242.735		& 58.994	&
126	& 250.531		& 61.71	&
127	& 258.5		& 64.519	&
128	& 266.644		& 67.423	\\
\hline\hline
\end{tabular} 
\end{minipage}}
\end{table}

For this particular example we use as initial Hamiltonian $8H_0$, that is, Eq.(\ref{eq:tp-h0}) multiplied by 8. Thus, the minimum eigenvalue of $8H_0$ is 0, whereas any other eigenvalue is 8.  

In Fig.\ref{fig:eigengap} we present the eigenvalue gap of $TP_2^\lambda$ for $w=0.57$ where we let $w_1=w$ and $w_2=1-w_1$; for this particular value of $w$ the Hamiltonian $H_{F,w}$ has a unique minimum eigenstate which corresponds to Pareto-optimal solution 59. The two smallest eigenvalues never touch, and exactly at $s=1$ the gap is $|\langle w,f(x_0)\rangle-\langle w,f(x_1)\rangle|$, where $x_0=59$ and $x_1=60$ are the smallest and second smallest solutions with respect to $w$, which agrees with lemmas \ref{lem:min-eigenvalue} and \ref{lem:eigen-gap}.

Similar results can be observed for different values of $w$ and a different number of qubits. Therefore, the experimental evidence lead us to conjecture that in the Two-Parabolas problem $g_{min}\geq  |\langle w, f(x)\rangle-\langle w,f(y)\rangle|$, where $x$ and $y$ are the smallest and second smallest solutions with respect to $w$.

%\textcolor{red}{{\bf Add here consequences of the conjecture.!!!}}

\begin{figure*}
\centering
\includegraphics[scale=0.9]{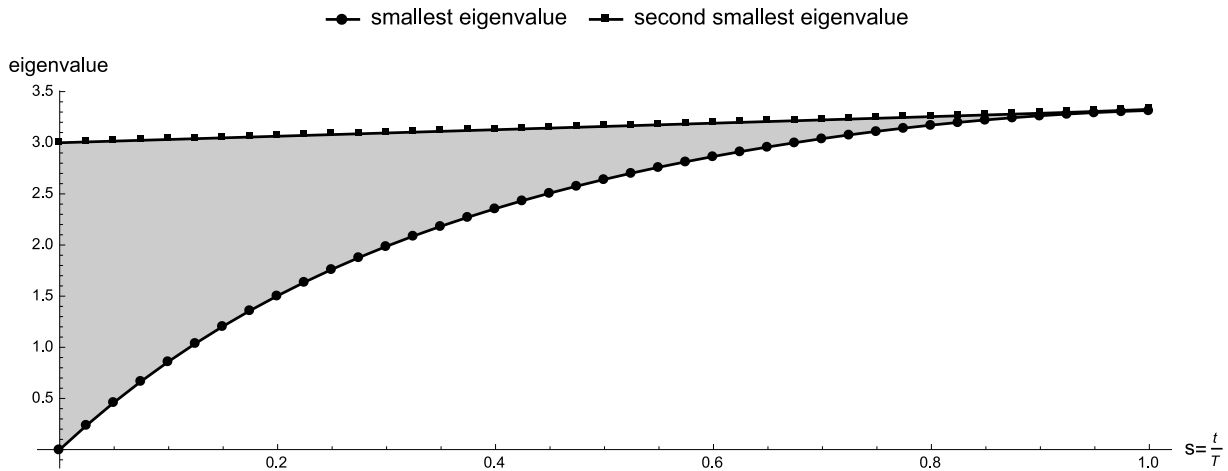}
\caption{Eigenvalue gap (in gray) of the Two-Parabolas problem of Fig.\ref{fig:tp-six} for $w=0.57$. The eigenvalue gap at $s=1$ is exactly $|\braket{w}{f(x)}-\braket{w}{f(y)}|$, where $x=59$ and $y=60$ are the smallest and second smallest solutions with respect to $w$.}
\label{fig:eigengap}
\end{figure*}

%%%%%%%%%%%%%%%%%%%%%%%%%%%%%%%%%%%%
%%%%%%%%%%%%%%%%%%%%%%%%%%%%%%%%%%%%
%%%%%%%%%%%%%%%%%%%%%%%%%%%%%%%%%%%%
\section{Concluding Remarks and Open Problems}\label{sec:open}
In this work we showed that the quantum adiabatic algorithm of \cite{FGG00} can be used for multiobjective combinatorial optimization problems. In particular, a simple linearization of the objective functions suffices to guarantee convergence to a Pareto-optimal solution provided the linearized single-objective problem has an unique optimal solution. Nevertheless, even if a linearization of objectives does not give an unique optimal solution, then it is always possible to choose an appropriate linearization that does.

We end this paper by listing a few promising and challenging open problems.
\begin{enumerate}
\item To make any practical use of Theorem \ref{the:alg} we need to chose $w\in W_d$ in such a way that the optimal solution of the linearization of an MCO  has an unique solution. It is very difficult, however, to know a priori which $w$ to chose in order to use the adiabatic algorithm. Therefore, more research is necessary to learn how to select these linearizations. One way could be to constraint the domain of an MCO in order to minimize the number of weak-equivalent solutions.
\item Another related issue is learn how to solve multiobjective problems in the presence of equivalent solutions. A technique of mapping an MCO with equivalent solutions to Hamiltonians seems very difficult owing to the fact that the smallest eigenvalue must be unique in order to apply the adiabatic theorem.
\item According to Theorem \ref{the:alg}, we can only find all supported solutions. Other works showed that the number of non-supported solutions can be much larger than the number of supported solutions \cite{EG00}. Hence, it is interesting to construct a quantum algorithm that could find an approximation to all Pareto-optimal solutions.
\item Prove our conjecture of Section \ref{sec:two-parabolas} that the eigenvalue gap of the Hamiltonian of Eq.(\ref{eq:ham-tp2}), corresponding to the Two-Parabolas problem, is at least the difference between the smallest solution and second smallest solution for any given linearization of the objective functions.
\end{enumerate}

%%%%%%%%%%%%%%%%%%%%%%%%%%%%%%%%%%%%
%%%%%%%%%%%%%%%%%%%%%%%%%%%%%%%%%%%%
%%%%%%%%%%%%%%%%%%%%%%%%%%%%%%%%%%%%
\bibliographystyle{alpha}
\bibliography{../../../library}

\end{document}

%%%%%%%%%%%%%%%%%%%%%%%%%%%%%%%
%
% File ends here
%
%%%%%%%%%%%%%%%%%%%%%%%%%%%%%%%